\title{Hardness of Random Reordered Encodings of Parity for Resolution and CDCL}
\author{
    Leroy Chew\textsuperscript{\rm 1},
    Alexis de Colnet\textsuperscript{\rm 1},
    Friedrich Slivovsky\textsuperscript{\rm 2},
    Stefan Szeider\textsuperscript{\rm 1}
}
\pgfplotsset{compat=newest} 
\newtheorem{theorem}{Theorem}
\newtheorem{example}{Example}
\newtheorem{definition}{Definition}
\newtheorem{lemma}{Lemma}
\newtheorem{fact}{Fact}
\newcommand{\id}{\textit{id}}
\newcommand{\tw}{tw}
\newcommand{\parity}{\mathsf{Parity}}
\newcommand{\rPar}{\mathsf{rPar}}
\newcommand{\rAddPar}{\mathsf{rAddPar}}
\newcommand{\xor}{\mathsf{xor}}
\newcommand{\cnf}{F}
\newcommand{\SB}{\{ \,}%
\newcommand{\SM}{\mid}
\newcommand{\SE}{\,\} }%
\begin{document}

\maketitle
\begin{abstract}
	Parity reasoning is challenging for Conflict-Driven Clause Learning (CDCL) SAT solvers.
	This has been observed even for simple formulas encoding two contradictory parity constraints with different variable orders (Chew and Heule 2020).
	We provide an analytical explanation for their hardness by showing that they require exponential resolution refutations with high probability when the variable order is chosen at random.
	We obtain this result by proving that these formulas, which are known to be Tseitin formulas, have Tseitin graphs of linear treewidth with high probability. Since such Tseitin formulas require exponential resolution proofs, our result follows.
	We generalize this argument to a new class of formulas that capture a basic form of parity reasoning involving a sum of two random parity constraints with random orders.
	Even when the variable order for the sum is chosen favorably, these formulas remain hard for resolution.
	In contrast, we prove that they have short DRAT refutations.
	We show experimentally that the running time of CDCL SAT solvers on both classes of formulas grows exponentially with their treewidth.
\end{abstract}

\section{Introduction}
 SAT solvers, including Conflict-Driven
Clause-Learning (CDCL) solvers, can solve practical problems with
millions of
variables~~\cite{DBLP:series/faia/SilvaLM09,FichteLeberreHecherSzeider23},
but on the other hand, can struggle with basic mathematical
principles.  The Handbook of Satisfiability \cite[Section
9.6.1]{HandbookSAT2nd} lists one such example: the problem of XOR
(exclusive-or) constraints, which is equivalent to the parity problem
of summation modulo~2.
XOR-constraints serve practical purposes, particularly around modern
cryptographical and cryptonanalytical problems. Provably hard XOR
problems are usually constructed over complex structures such as
expander graphs~\cite{Urquhart87,BW01}, but much simpler problems
involving only two constraints were found experimentally hard for CDCL~\cite{ChewH20} and, up until this paper, were not matched with a corresponding lower bound in resolution. 

Resolution is particularly important here because the
relationship with CDCL solving is two-way; CDCL runs of unsatisfiable
instances can be output as resolution proofs, but also every
resolution refutation can be followed completely by a CDCL algorithm
(with a few non-deterministic choices) to return UNSAT~\cite{CDCLRes}. Lower bounds
on the length of resolution proof of unsatisfiability have been shown
for pure XOR problems structured by graphs and represented in CNF
formulas called Tseitin
formulas~\cite{Tseitin68,Tseitin83,Urquhart87}. Finding a complete
characterization of hard Tseitin formulas for resolution is still an
open problem. However,  exponential lower bounds for resolution are known
under the suitable condition: that their underlying graphs have high
linear \emph{treewidth} (a graph invariant that measures how close a
graph is to being a tree). The relationship between treewidth and
resolution proof length has been extensively studied for Tseitin
formulas already \cite{BW01,GalesiTT20,ItsyksonRS22,deColnetM23}.
 
Here, we look at some XOR-constraint problems that are
strikingly simple to define and whose hardness for resolution was
observed empirically but yet to be understood
theoretically~\cite{ChewH20}. We prove that they are, in fact, families of
Tseitin formulas and that linear treewidth emerges for
almost all of them, thus showing asymptotic exponential lower bounds
for resolution. Furthermore, our experiments suggest that this is not
just theoretical and asymptotic for proof systems; treewidth indeed
correlates to the solving time of CDCL solvers on these families. In the rest of this section, 
we present the problems and our results.   

\subsection{Problem 1: Reordered Parity}
The standard linear CNF encoding of an XOR-constraint over $n$ propositional variables splits the constraint into a sequence of XORs of size 3 according to an ordering of its variables~\cite[Section 2.2.5]{HandbookSAT2nd}. The encoding uses one auxiliary variable for every $k < n$ to store the parity of the first $k$ variables. The simplest form of the XOR-constraint problem starts with two opposite XOR-constraints and their standard linear CNF encodings where the $n$ variables appear in a different order given by the permutation $\sigma$. The two CNF are saying the sum of the $n$ variables is both odd and even, so their conjunction---denoted by $\rPar(n,\sigma)$---is a contradiction a SAT solver should be able to recognize.  

\citet{ChewH20} showed that
these problems could be proven false by $O(n\log n)$-size DRAT proofs
even without new variables~\cite{buss2019drat}. We show here that
resolution proofs on their own are often unable to handle even these
restricted examples.  For some $\sigma$, such as the identity mapping,
resolution proofs are short; in fact, the identity mapping gives the
\texttt{Dubois} family in the SAT library. However, the easiness is
not seen with other permutations.  \citet{ChewH20} conducted
experiments that showed that CDCL solvers struggle and time-out around
$n=50$ for uniformly selected permutations, although a theoretical
lower bound was never proved.  We show that as $n$ increases,
a random permutation $\sigma$ yields, with
high probability, a formula $\mathsf{rPar}(n, \sigma)$ whose
resolution proof requires exponentially many clauses.

\begin{theorem}
	There is a constant $\alpha>0$ such that, with probability tending
	to $1$ as $n$ increases, the length of a
	smallest resolution refutation of the unsatisfiable formula
	$\mathsf{rPar}(n, \sigma)$, where $\sigma$ is chosen uniformly at
	random, is at least $2^{\alpha n}$.
\end{theorem}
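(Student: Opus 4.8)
The plan is to reduce the resolution lower bound to a treewidth lower bound, using the fact recalled in the excerpt that $\rPar(n,\sigma)$ is an unsatisfiable Tseitin formula together with the known result that Tseitin formulas on connected bounded-degree graphs of treewidth $w$ require resolution refutations of size $2^{\Omega(w)}$ \cite{GalesiTT20,ItsyksonRS22}. So it suffices to show that the Tseitin graph $\calG(\sigma)$ has $\tw(\calG(\sigma)) = \Omega(n)$ with high probability. First I would pin down the shape of $\calG(\sigma)$. The standard linear encoding turns each parity constraint into a chain of $3$-XORs, i.e.\ a path whose spine edges are the auxiliary variables and whose pendant (rung) edges are the shared variables $x_1,\dots,x_n$. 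Since the two constraints use the same $x_j$ but order them by $\id$ and by $\sigma$, $\rPar(n,\sigma)$ is the Tseitin formula of the graph built from two paths $P_1,P_2$ of length $\approx n$ by adding, for each $j$, a rung joining position $j$ of $P_1$ to position $\sigma(j)$ of $P_2$. This ``$\sigma$-twisted ladder'' has maximum degree $\le 3$ and is connected, and the opposite parities make the total charge odd; so everything reduces to bounding its treewidth.

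The heart of the argument is the probabilistic claim $\tw(\calG(\sigma)) \ge \epsilon n$ w.h.p., for an absolute constant $\epsilon>0$, which I would prove through balanced separators: if $\tw(G)\le k$ then $G$ admits a set $X$ with $|X|\le k+1$ whose removal leaves a $2/3$-balanced partition $V\setminus X = A\sqcup B$ with no edge between $A$ and $B$. It therefore suffices to show that, w.h.p., no set $X$ of size $\le\epsilon n$ induces such a separation. Fix a candidate: removing $X$ cuts each path into $\le |X|+1$ intervals, a valid separation assigns each interval to $A$ or $B$, and I call a rung \emph{crossing} if its two endpoints receive different labels. Every crossing rung is incident to $X$, so a valid separation of size $\le\epsilon n$ forces at most $\epsilon n$ crossings. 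I would then show that a random $\sigma$ makes crossings abundant for \emph{every} balanced labeling: writing $a_i$ for the number of $A$-positions on $P_i$, the expected number of crossings is $\bigl(a_1(n-a_2)+(n-a_1)a_2\bigr)/n$, and a short optimization over all $2/3$-balanced choices shows this is at least $c_0 n$ for an absolute constant $c_0>0$.

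Concentration then disposes of a single candidate. The number of crossings changes by $O(1)$ under a transposition of $\sigma$, so McDiarmid's inequality for permutations gives $\Prb{\#\text{crossings}\le \epsilon n}\le e^{-c_1 n}$ for all $\epsilon\le c_0/2$, with $c_1>0$ absolute. Finally I would union-bound over candidates: a separation of size $s=\epsilon n$ is described by the set $X$ ($\le\binom{2n}{s}$ choices) and a labeling of its $\le 2s+2$ intervals ($\le 2^{2s+2}$ choices), so there are at most $\binom{2n}{s}\,2^{2s+2} = 2^{O(\epsilon\log(1/\epsilon)\,n)}$ of them. Choosing $\epsilon$ small enough that this structural entropy drops below $c_1$ makes the union bound tend to $0$, yielding $\tw(\calG(\sigma))\ge\epsilon n$ w.h.p.; feeding this into the Tseitin--treewidth--resolution theorem gives refutation size $2^{\alpha n}$ with $\alpha=\Omega(\epsilon)$.

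I expect the main obstacle to be the interplay in the last step: the union bound ranges over exponentially many separator structures, so the per-structure concentration exponent $c_1$ must strictly dominate the structural entropy $O(\epsilon\log(1/\epsilon))$, which is precisely what forces $\epsilon$ to be a small absolute constant and requires the expected-crossing bound $c_0 n$ to hold \emph{uniformly} over all balanced labelings, not merely the aligned one that is cheap when $\sigma=\id$. Some care is also needed to turn the $2/3$-balanced separator guaranteed by small treewidth into a two-sided partition with both sides of size $\Omega(n)$ so that the optimization applies; this is routine but must be settled before the counting.
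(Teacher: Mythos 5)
Your proof is correct, and it reaches the same two-stage structure as the paper (identify $\rPar(n,\sigma)$ as a Tseitin formula over a $\sigma$-twisted ladder, show the ladder has treewidth $\Omega(n)$ w.h.p., then invoke the fact that bounded-degree Tseitin formulas of linear treewidth need resolution refutations of size $2^{\Omega(n)}$ --- the paper proves exactly this last step from the Ben-Sasson--Wigderson width-length relation and Galesi et al.'s width characterization). Where you genuinely diverge is in the central probabilistic lemma. The paper contracts the rungs to obtain the superposition of two random paths, closes them into a graph distributed as the superposition $\mathcal{H}_n \oplus \mathcal{H}_n$ of two random Hamiltonian cycles, invokes Kim--Wormald's contiguity of that model with uniform $4$-regular graphs, and then combines Friedman's bound on the second eigenvalue with Chandran--Subramanian's spectral lower bound on treewidth. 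You instead argue directly via balanced separators: every small separator of the ladder induces a two-sided interval labeling with few crossing rungs, the expected number of crossings under a uniform $\sigma$ is at least $\bigl(\tfrac{4}{9}-O(\epsilon)\bigr)n$ uniformly over all $2/3$-balanced labelings (your optimization of $\alpha+\beta-2\alpha\beta$ over $\tfrac23\le\alpha+\beta\le\tfrac43$ is right), McDiarmid for permutations gives an $e^{-\Omega(n)}$ tail with an absolute constant, and the $2^{O(\epsilon\log(1/\epsilon)n)}$ count of candidates is beaten by taking $\epsilon$ small. Your route is elementary and self-contained, yields explicit constants, and is robust to the minor end-of-ladder irregularities that the paper handles by a separate edge-contraction lemma; the paper's route is shorter modulo its citations and additionally situates $G_\sigma^*$ inside a well-studied random-regular-graph model, which buys expansion and not just treewidth. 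The two caveats you flag yourself --- uniformity of the crossing bound over all balanced labelings, and the entropy-versus-concentration tradeoff in the union bound --- are indeed the only delicate points, and both check out.
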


A key observation here is that the $\rPar$ formulas are 
Tseitin formulas. The fact that the $\rPar$ formulas come from
standard CNF encodings of very simple XOR-constraints problems makes
them natural examples of Tseitin formulas that are more likely to
occur in practice than those that appear in proofs of hardness that
are often constructed from arbitrary expander
graphs~\cite{Urquhart87}. Theorem~\ref{theorem:hard_formulas} proves
the context assumed by \citet{ChewH20} that a powerful proof system
such as DRAT$^-$ was necessary for the short proofs of $\rPar$, as we
now have exponential resolution lower bounds. Along with the evidence
shown by our experiments, we now conclude that high treewidth is the
reason for the hardness in the experiments of \citet{ChewH20}. We can
also take this as clarification that order matters in the encoding of
parity constraints in general.

\subsection{Problem 2: Random Parity Addition}

There are several effective strategies for dealing with
XOR-constraints in practice.  One method that has succeeded is to
employ Gaussian elimination~\cite{HJ12,Soos12} techniques to simplify
the problem.  Two contradictory parity constraints fall short of
representing what happens in an average step of Gaussian
elimination. Instead, Gaussian elimination involves many steps using
the bitwise addition of two XOR-constraints to produce a new
constraint.  In order to study such a step as an instance for
resolution, we have to write it as a contradiction. So here we modify
$\rPar$ to use three XOR-constraints, with the third containing 
the variables in the symmetric difference of the first two
constraints and then flip some literals to create a
contradiction. The input XOR-constraints are encoded in
CNF formulas $a$ and $b$ using the standard linear
encoding. We then define a CNF encoding
$\rAddPar(a,b)$ of the contradiction similar to
$\mathsf{rPar}(n, \sigma)$, and we show their hardness for resolution.

\begin{theorem}
	With high probability, for any two random parity constraints over $n$ variables encoded randomly and independently in CNF formulas $a$ and $b$ using the standard linear encoding, the length of the shortest resolution refutation of $\mathsf{rAddPar}(a,b)$ is exponential in $n$, the length of a shortest resolution refutation of $\mathsf{rAddPar}(a,b)$ is exponential in the number of variables.
\end{theorem}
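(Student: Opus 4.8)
The plan is to follow the same route as the $\rPar$ theorem: exhibit $\rAddPar(a,b)$ as a Tseitin formula, argue that its Tseitin graph has treewidth linear in $n$ with high probability, and then invoke the cited treewidth-to-resolution lower bound to conclude that every refutation has $2^{\Omega(n)}$ clauses. First I would verify the Tseitin encoding. Each of the three XOR-constraints --- $a$ over its support $A$, $b$ over $B$, and the (parity-flipped) sum over $A \triangle B$ --- is given by the standard linear encoding, whose clauses over the spine of auxiliary variables realize a caterpillar: a path of degree-$3$ vertices (one per $3$-XOR) with the original variables hanging off as legs. The three caterpillars are glued along their shared legs according to the membership pattern of each original variable: variables of $A \cap B$ are shared by the $a$- and $b$-caterpillars, variables of $A \setminus B$ by the $a$- and sum-caterpillars, and variables of $B \setminus A$ by the $b$- and sum-caterpillars. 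One must check that the flipped literals make the total charge odd, so that the formula is genuinely unsatisfiable, and that the glued object is an honest Tseitin graph $G$.

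The heart of the argument is the treewidth lower bound, and the key idea is that the hardness already lives inside the interaction of $a$ and $b$, so it survives even a favorable order for the sum. Concretely, I would extract from $G$ a minor $H$ isomorphic to the Tseitin graph of an $\rPar$ instance on the shared variable set $A \cap B$: delete the entire sum-caterpillar, delete all legs outside $A \cap B$, and suppress the degree-$2$ spine segments that this leaves behind. What remains is exactly two caterpillars on the common variables, one following the order induced by $a$ and one following the order induced by $b$ --- that is, an $\rPar(m,\tau)$ graph on $m = |A \cap B|$ variables. Since we discard the sum-caterpillar entirely, its order never enters the bound, which is precisely why choosing that order favorably cannot help. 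Because treewidth is minor-monotone, $\tw(G) \ge \tw(H)$, and it is essential that we glue along $A \cap B$, where both relevant orders are random, rather than along $A \setminus B$, where the favorable sum-order sits.

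Two probabilistic facts then finish the proof. First, under the sampling model each original variable lies in $A \cap B$ with constant probability, so by concentration $m = |A \cap B| = \Theta(n)$ with high probability. Second, the permutation $\tau$ relating the two induced orders on $A \cap B$ is uniformly random: restricting an independent uniform order on $A$ (respectively $B$) to the subset $A \cap B$ yields an independent uniform order on $A \cap B$, so their relative order is a uniform permutation of $m$ elements. Conditioning on the set $A \cap B$, the $\rPar$ theorem proved above applies verbatim to $H$, giving $\tw(H) \ge \alpha' m = \Omega(n)$ with high probability; hence $\tw(G) = \Omega(n)$, and the treewidth-resolution bound yields the claimed $2^{\Omega(n)}$ lower bound.

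The main obstacle I anticipate is the minor-extraction bookkeeping: one has to confirm that deleting the sum-caterpillar together with the unshared legs and then suppressing the resulting degree-$2$ vertices produces precisely an $\rPar$ Tseitin graph, with edge charges and spine structure matching, and that no step merges or separates legs in a way that corrupts the encoded permutation $\tau$. A secondary subtlety is ensuring that the uniformity-and-independence claim for $\tau$ remains valid after conditioning on the random set $A \cap B$, and that the constant-probability-of-membership estimate holds under whatever precise model is used for ``random parity constraints over $n$ variables.''
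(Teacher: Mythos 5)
Your proposal is correct and follows essentially the same route as the paper: the authors likewise delete the sum layer, contract the unshared legs to obtain a minor isomorphic to $G_\sigma$ on $A \cap B$, show that $\sigma$ is uniform conditioned on $|A\cap B|=k$ (their Lemma on the distribution of the minor), establish $|A\cap B|=\Omega(n)$ with high probability, and conclude via minor-monotonicity of treewidth and the treewidth-to-resolution lower bound. The only cosmetic difference is that you bound $|A\cap B|$ directly by concentration, whereas the paper argues via the maximum of the three pairwise intersections being at least $|A\cup B|/3$.
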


The $\rAddPar$ formulas turn out to also be Tseitin formulas, so this again provides a new intuitive family that demonstrates the hardness of Tseitin formulas---and yet again shows that order matters when encoding parity constraints.

Adding Gaussian elimination to SAT-solving/\hspace{0pt}preprocessing
presents several technical challenges.  An example is
verification---unsatisfiable instances in CDCL SAT solvers
can be readily verified in resolution proofs and thus verified in the
more powerful checking format standard DRAT~\cite{JHB12}. It was
therefore pertinent to show that Gaussian elimination techniques could
also be verified efficiently in DRAT \cite{PR16}. For the specific
family of $\mathsf{rPar}(n, \sigma)$, it was shown to have DRAT$^-$
refutations in $O(n\log n)$ many lines using a tool from
\citet{ChewH20}. Recently, a BDD-based SAT solver augmented with
pseudo-Boolean constraints \cite{BBH22} was shown to have improved the
result experimentally.  We can generalize Chew and Heule's upper-bound
results to $\rAddPar(a,b)$.

\begin{theorem}
	For any two random parity constraints over $n$ variables
        encoded randomly and independently in CNF formulas $a$ and $b$
        using the standard linear encoding, there are DRAT$^-$
        refutations of $\rAddPar(a,b)$ with  $O(n\log n)$ many lines.
\end{theorem}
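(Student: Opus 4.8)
The plan is to reduce the construction of a DRAT$^-$ refutation of $\rAddPar(a,b)$ to the reordering machinery that \citet{ChewH20} developed for $\rPar(n,\sigma)$. Recall that their key tool is a DRAT$^-$ subderivation that transforms the standard linear CNF encoding of a single parity constraint under one variable order into the encoding under any other order, reusing the auxiliary variables and costing only $O(n\log n)$ lines (the logarithmic factor arising from realizing the target reordering through a divide-and-conquer schedule of $O(n\log n)$ local chain rewrites rather than the $\Theta(n^2)$ adjacent transpositions a naive bubble-sort would require). I would treat this reordering procedure as a black box and invoke it three times, interleaved with one new \emph{addition} gadget that implements the bitwise sum of two XOR-constraints.

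Concretely, write $A$ for the constraint encoded by $a$ over a variable set $S_A$ with parity $c_A$, and $B$ for the constraint encoded by $b$ over $S_B$ with parity $c_B$; by construction the third encoding $c$ in $\rAddPar(a,b)$ is the standard encoding of the negation of $D := A \oplus B$, a constraint over the symmetric difference $S_A \triangle S_B$ with parity $c_A \oplus c_B$. First I would apply the reordering procedure to $a$ and to $b$ so that in both chains the shared variables $S_A \cap S_B$ form a contiguous prefix; after this step both encodings expose, at a designated auxiliary literal, the partial parity $p = \bigoplus_{i \in S_A \cap S_B} x_i$ over exactly the common block. This costs $O(n\log n)$ lines.

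Next comes the addition step, which is the genuinely new ingredient. Since $A \oplus B$ cancels the shared block ($p \oplus p = 0$), I would splice the two reordered chains together with a short sequence of DRAT$^-$ additions and deletions that remove the two common prefixes and connect the remaining suffixes over $S_A \setminus S_B$ and $S_B \setminus S_A$ into a single chain, producing the standard linear encoding of $D$ over $S_A \triangle S_B$; this gadget is local and I expect it to cost only $O(n)$ lines. Finally, the derived encoding of $D$ together with the given encoding $c$ of $\overline{D}$ is precisely an $\rPar$-type contradiction over the variable set $S_A \triangle S_B$: one more invocation of the reordering procedure aligns their orders, after which the two chains force opposite parities and the empty clause follows by an $O(n)$ resolution clash. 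Summing, the refutation uses $O(n\log n)$ lines; because the bound is claimed for every encoding rather than with high probability, no probabilistic reasoning is needed.

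The main obstacle will be the addition gadget: each clause I add must satisfy the RAT property against the \emph{current} accumulated formula, and since the reordering reuses auxiliary variables the additions and deletions have to be sequenced very carefully so that the blocked/RAT conditions hold at every line and so that deleting the spliced-out prefixes does not destroy any clause still needed downstream. Verifying that this cancellation can always be scheduled within $O(n)$ lines---and that it interacts correctly with the auxiliary variables left behind by the two reorderings---is where the real work lies; the three reordering calls themselves are immediate once Chew and Heule's procedure is invoked as stated.
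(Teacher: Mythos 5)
Your route is genuinely different from the paper's. The paper's proof of Theorem~\ref{thm:threedrat} uses Lemma~\ref{reordering} to bring \emph{all three} chains to one common global order on $X$, and then runs a single constant-width resolution induction that derives, for each position $j$, the four clauses of $\xor(t^a_{\tau_a(j)}, t^b_{\tau_b(j)}, t^c_{\tau_c(j)})$ (``the three running parities sum to zero'') by introducing the next input variable and immediately eliminating it, using the Tseitin property that each input variable occurs in exactly two of the three chains. There is no explicit derivation of an encoding of $A\oplus B$ and no reduction to $\rPar$; the flipped literal in $c$ yields the contradiction at the end of the sweep. Your plan instead reorders only $a$ and $b$ (intersection first), manufactures the sum constraint as a spliced chain, and then reuses the $\rPar$ upper bound (Theorem~\ref{thm:DRATrpar}). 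That modularity is attractive and the asymptotics would survive, but it trades the paper's one three-way induction for two two-way inductions plus a splicing step.

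The gap is exactly where you locate the real work: the addition gadget is asserted, not constructed, and two of the claims made about it are off. First, it is not ``local'': to justify cancelling the common block you must derive that the two designated auxiliary literals both carry $p=\bigoplus_{i\in S_A\cap S_B}x_i$, and in DRAT$^-$ without new variables this is itself a linear resolution induction over the entire common prefix, deriving the equivalence of the two running Tseitin variables link by link (still $O(n)$ lines, so the bound is unharmed, but this \emph{is} the content of the step, not a constant-size splice). Second, the spliced object is not the ``standard linear encoding of $D$'': its ends are 3-XORs anchored at auxiliary variables and its middle joint is a 2-XOR equivalence, so Lemma~\ref{reordering} does not apply to it as stated (the lemma reorders a $\parity(X,S,\sigma)$ block whose auxiliaries occur nowhere else in the formula); you would need either to massage the spliced chain into $\parity$ form or, more simply, to reorder $c$ instead, whose auxiliaries $U$ genuinely are fresh. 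Finally, the degenerate cases $S_A=S_B$ and $S_A\cap S_B=\emptyset$ fall outside your scheme (nothing to align with $c$, or no prefix to cancel and hence two disconnected suffix chains) and need separate treatment, as the paper gives for $S_A=S_B$. None of this is fatal, but until the gadget is written out at the level of individual RAT additions and deletions, with the side conditions checked against the accumulated formula, the argument remains a plan rather than a proof.
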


$O(n\log n)$ is already a good upper bound, and this can potentially
be used in verification.  One of the advantages of DRAT$^-$ is that no
extension variables are added. This will speed up the proof checking.

\iftrue
\begin{theorem}
	For any two parity constraints over $n$ variables, and any standard linear encodings $a$ and $b$ of theses constraints, there are Extended Resolution refutations of $\rAddPar(a,b)$ that have $O(n\log n)$ many lines.
\end{theorem}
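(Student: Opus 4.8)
The plan is to carry out the single Gaussian-elimination step $A \oplus B = C$ explicitly inside Extended Resolution, using the extension rule to recompute the relevant partial sums in whatever order is convenient. First I would reinterpret each of the three standard linear encodings as a \emph{parity chain}: the auxiliary variables of the encoding of a constraint $\bigoplus_{i \in S} x_i = c$ are exactly the prefix sums $t_k = \bigoplus_{j \le k} x_{\pi(j)}$ along its variable order $\pi$, and from the $O(|S|)$ defining clauses one derives the terminal clause asserting $t_{|S|} = c$ in $O(|S|)$ resolution steps. Since the third set is $S_A \triangle S_B$ and the parity of $C$ is flipped, every original variable occurs an even number of times across the three constraints, so $A \oplus B \oplus C$ is the identically-zero parity; the refutation must expose the clash between this (value $0$) and the value $c_A \oplus c_B \oplus c_C = 1$ forced by the three terminal clauses. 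The difficulty, and the reason plain resolution fails, is that the three chains use incompatible variable orders, so I would first bring all three into a common, say variable-index-sorted, order.

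The engine delivering the $O(n \log n)$ bound is a merge-sort-style reordering gadget realised with fresh extension variables. Given a parity chain over a set $S$ in order $\pi$, I would sort it into index order recursively: splitting the sequence $\pi$ at its midpoint is free, because the first half is already a prefix of the chain and the second half is obtained as suffix sums $t_{m/2+k} \oplus t_{m/2}$ in $O(1)$ each; I then recurse on the two halves and \emph{merge} the resulting sorted sub-chains. The key point is that a merge of two chains whose internal orders are preserved is a shuffle, so every partial sum $w_k$ of the merged chain equals $u_p \oplus v_q$ for the unique split $p + q = k$ of the two input prefixes. Each $w_k$ is introduced by the extension rule (an $\oplus$-definition expands to a constant number of applications of the rule), and the chain step relating $w_{k}$ to $w_{k-1}$ and the $\text{next variable}$ is then derived in $O(1)$ resolution steps. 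A merge thus costs $O(|S|)$ lines, there are $O(\log |S|)$ levels, and each chain is sorted in $O(|S| \log |S|) = O(n \log n)$ lines.

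With $A$, $B$, and $C$ all expressed as sorted chains, I would merge the sorted chains of $A$ and $B$ in one further $O(n)$ merge. In sorted order the two occurrences of each shared variable $x_i$ with $i \in S_A \cap S_B$ are adjacent, so from $w_{k+1} \leftrightarrow w_k \oplus x_i$ and $w_{k+2} \leftrightarrow w_{k+1} \oplus x_i$ one derives $w_{k+2} = w_k$ in $O(1)$ steps, cancelling the pair. Deleting all cancelled pairs yields a sorted parity chain over $S_A \triangle S_B$ whose terminal value is $c_A \oplus c_B$. This chain and the sorted chain derived from $C$ range over the same set in the same order but assert complementary terminal parities $c_A \oplus c_B$ and $\overline{c_A \oplus c_B}$, so their terminal clauses resolve to the empty clause in $O(n)$ steps. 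Summing the costs gives an Extended Resolution refutation of $\rAddPar(a,b)$ with $O(n \log n)$ lines, and since the construction makes no probabilistic assumption it applies to every pair of standard linear encodings.

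The \textbf{main obstacle} is the bookkeeping of the reordering gadget: I must verify that each local move---extracting suffix sums, defining a shuffled partial sum, and cancelling an adjacent repeated variable---is a sound, constant-size Extended Resolution derivation from the neighbouring extension definitions, that every introduced variable is genuinely fresh so the extension rule applies, and that the pieces compose without a hidden quadratic blow-up. This amounts to reconstructing the Chew--Heule $O(n \log n)$ reordering argument, but with honest extension variables in place of the variable-free $\mathrm{DRAT}^-$ additions used for the preceding upper bound, which is precisely what lets the same asymptotic bound carry over to Extended Resolution.
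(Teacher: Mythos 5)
Your proposal is correct and reaches the right bound, but it gets there by a different route than the paper. The paper's proof of this theorem is a one-line corollary of its DRAT$^-$ result (Theorem~\ref{thm:threedrat}): every RATA step in that proof arises from the switching gadget of Lemma~\ref{switching}, and each such use is replaced by the Extended Resolution version (Lemma~\ref{switchingER}), which introduces a fresh variable $p'$ in place of the recycled $p$. The reordering there is done by $O(n\log n)$ local adjacent switches organised as chain $\to$ balanced binary tree $\to$ leaf permutation $\to$ chain, and the endgame is a simultaneous three-chain induction deriving the clauses of $\xor(t^a_{\tau_a(j)},t^b_{\tau_b(j)},t^c_{\tau_c(j)})$. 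You instead build the ER proof from scratch: a merge-sort of explicitly defined prefix-sum extension variables ($w_k \leftrightarrow u_p \oplus v_q$, each a legal XOR-of-two extension with an $O(1)$ resolution derivation of the new chain link), followed by merging the $A$ and $B$ chains, cancelling adjacent duplicate occurrences, and comparing the resulting $A \triangle B$ chain against the $C$ chain. Both constructions are sound and both cost $O(n\log n)$; what the paper's route buys is economy (the ER theorem falls out of the DRAT$^-$ proof with no new analysis), while yours buys a self-contained, fully explicit ER derivation that never passes through RAT reasoning, at the cost of re-establishing the reordering bookkeeping you correctly identify as the main obstacle. Two small points to tidy: the standard encoding has no variable $t_{|S|}$ (its first and last constraints contain two input variables each), so you need to introduce the terminal prefix-sum variables by extension before you can speak of a unit terminal clause; and the derivation of each merged chain link eliminates three old chain variables from three width-3 XOR constraints, which is constant-width and hence $O(1)$ steps, but that is worth stating since it is where a hidden blow-up would live.
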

\fi

\section{Preliminaries}
Boolean \emph{variables} take value in $\{0,1\}$. A \emph{literal} is either a variable $x$ or its negation $\bar x$. \emph{Clauses} are disjunctions of literals and \emph{CNF formulas} are conjunctions of clauses.
The negation of clause $C$ can be labelled $\bar C$ and is a CNF of clauses each containing one literal. The symbols $\lor$, $\land$ denote disjunction and conjunction and we use~$\oplus$ for exclusive disjunction, that is, $x \oplus y = x + y \text{ mod } 2$. The \emph{canonical CNF representation} of a \emph{parity constraint} $x_1 \oplus \dots \oplus x_k = 0$ is the CNF formula $\xor(x_1,\dots,x_k)$ composed of all $2^{k-1}$ clauses of size $k$ that contain an odd (resp. even) number of positive literals when $k$ is odd (resp. even). For instance
\[\xor(p,q,r):= (\bar p \vee \bar q \vee \bar r) \wedge (\bar p \vee  q \vee  r) \wedge ( p \vee \bar q\vee  r) \wedge ( p \vee q \vee \bar r). \]
The canonical representation of $x_1 \oplus \dots \oplus x_k = 1$ is just $\xor(x_1,\dots,x_k)$ where we flip all literals for an arbitrary variable, for instance $\xor(\bar x_1,x_2,\dots,x_k)$.
\subsection{Proofs and Refutations}

\subsubsection{Resolution.}

\emph{Resolution} is a refutational proof system that works by adding clauses based on a single binary rule---the resolution rule~\cite{Rob63}.
The resolution rule's new clause is a logical implication. Adding it to the formula preserves not only satisfiability but also the  models.
A resolution proof that derives the empty clause shows that the original formula is unsatisfiable. 

\begin{prooftree}
	\AxiomC{$C_1 \vee x$}
	\AxiomC{$C_2 \vee \neg x$}
	\RightLabel{(Resolution)}
	\BinaryInfC{$C_1\vee C_2$}
\end{prooftree}

\subsubsection{Extended Resolution.}

\emph{Extended resolution} adds an extension rule, it creates extension clauses that introduce a new variable with clauses that force that new variables to follow a definition. If we treat the extension variables as new, the extension rule does not change the satisfying assignments when considering only the original variables. Therefore, when we reach the empty clause, we know that the original formula must have been unsatisfiable. 

\begin{example}
	We can add the following extension clauses that state that extension variable $n$ is the exclusive or of $x$ with~$y$:
	$(\bar x \vee y \vee n),	(x \vee \bar y \vee n),
	(x \vee y \vee \bar n),	(\bar x \vee \bar y \vee \bar n)$.

\end{example}

\subsubsection{DRAT.}
Unit propagation is an incomplete, model preserving, and  polynomial-time process.
\begin{definition}
	A unit clause is a clause of one literal. 
	\emph{Unit propagation} takes any unit clause  $(a)$ and resolves it with every clause which has an $\bar a$ in it (possibly creating another unit clause). After all resolvents are found, the clause $(a)$ is removed, and we repeat the process for another unit clause until no unit clauses remain. We also terminate if we reach the empty clause, and we can write $\cnf \vdash_1 \bot$ to denote that unit propagation of $\cnf$ reaches the empty clause. 
\end{definition}

While unit propagation itself is incomplete, it terminates in
polynomial time. It therefore is a convenient tool for
checking implication, we can use this in the concept of an asymmetric
tautology, which is a clause that must be true assuming a CNF  because its negation would cause a conflict via unit propagation. 
\begin{definition}[\citealt{JHB12}]\label{def:ATA}
	Let $\cnf$ be a CNF formula. A clause $C$ is an \emph{asymmetric tautology} (AT) w.r.t.~$F$ if $\cnf\wedge  \bar{C}\vdash_{1} \bot $.
\end{definition}

A clause being an asymmetric tautology in $\cnf$ is a generalization of being
a resolvent of some pair of clauses in~$\cnf$.
We also want to be able to generalize the creation of extension clauses. 
To do this, we first generalize extension clauses to blocked clauses. Blocked clauses are clauses that have a literal that cannot be resolved without causing a tautology, and so they are non-threatening to the satisfiability of the formula. 
We generalize blocked clauses to RAT clauses, where we widen the condition of tautology to asymmetric tautology.  

\begin{definition}[\citealt{JHB12}]
	Let $\cnf$ be a CNF formula. A clause $C$ is a \emph{resolution asymmetry tautology} (RAT) w.r.t.~$F$ if
	there exists a literal $l \in C$ such that 
	for every clause $\bar{l}\vee D \in \cnf$ it holds that
	$\cnf\wedge \bar{D} \wedge \bar{C}\vdash_{1} \bot $.
\end{definition}

DRAT is a generalized and application friendly version of extended resolution. Each rule modifies a formula  by either adding (removing) a clause while preserving satisfiability (unsatisfiability), respectively. Unlike resolution, clauses can be added (removed) with or without preserving the exact set of satisfying models of a formula. 
The first set of DRAT rules show us how we can add or remove clauses while preserving models by using asymmetric tautologies when  $C$ is AT w.r.t.~$\cnf$:
	\begin{minipage}{.24\textwidth}
		\begin{prooftree}
			\AxiomC{$\cnf$}
			\RightLabel{(ATA)}
			\UnaryInfC{$\cnf \wedge C$}
		\end{prooftree}	
	\end{minipage}\begin{minipage}{.24\textwidth}
		\begin{prooftree}
			\AxiomC{$\cnf  \wedge C$}
			\RightLabel{(ATE)}
			\UnaryInfC{$\cnf$}
		\end{prooftree}
	\end{minipage}
\smallskip 

\noindent The second set of rules use resolution asymmetric tautologies  ($C$ is RAT w.r.t.~$\cnf$) and do not preserve models:
	\begin{minipage}{.24\textwidth}
		\begin{prooftree}
			\AxiomC{$\cnf$}
			\RightLabel{(RATA)}
			\UnaryInfC{$\cnf \wedge C$}
		\end{prooftree}	
	\end{minipage}\begin{minipage}{.24\textwidth}
		\begin{prooftree}
			\AxiomC{$\cnf  \wedge C$}
			\RightLabel{(RATE)}
			\UnaryInfC{$\cnf$}
		\end{prooftree}	
	\end{minipage}
\smallskip 
	
\noindent In all clause additions, we can add a new variable as long as it works with the side conditions of the rules. 
However, an excess of new variables can cause a proof checker to slow down, so there is a version of DRAT that forbids new variables known as DRAT$^-$.

\iftrue
\subsection{Treewidth}

A tree decomposition $(T,\lambda)$ of $G$ consists of a tree $T$ equipped with a mapping $\lambda : V(T) \rightarrow \mathcal{P}(V(G))$ such that
\begin{itemize}
	\item $\bigcup_{t \in V(T)} \lambda(t) = V(G)$,
	\item for every $uv \in E(G)$, there is $t \in V(T)$ such that $\{u,v\} \subseteq \lambda(t)$,
	\item for every $v \in V(G)$, $T[\SB t \SM v \in \lambda(t)\SE]$ is a connected subtree of $T$.
\end{itemize}
The \emph{width} of $(T,\lambda)$ is $\max_{t \in V(T)} |\lambda(t)|-1$. The treewidth of $G$, denoted by $\tw(G)$, is the smallest width over all tree decompositions of $G$.
\fi

\subsection{Tseitin Formulas}

A \emph{Tseitin formula} is a CNF formula that represents a system of parity constraints where every variable appears in exactly two constraints. Such a formula is determined by a graph~$G$: each edge $e$ corresponds to a unique Boolean variable $x_e$ and each vertex $v$ defines a constraint $\bigoplus_{e \in E(v)} x_e = c(v)$, where $E(v)$ is the set of edges incident to $v$ in $G$ and $c : V(G) \rightarrow \{0,1\}$ is the \emph{charge function}. The Tseitin formula $T(G,c)$ is the conjunction of the $\xor$ representations for the constraints for every $v \in V(G)$. We call $G$ the \emph{Tseitin graph} of the formula. It is often assumed that the maximum degree of all vertices in $G$ is bounded by a constant, so that the size of $T(G,c)$ is linear in $|var(T(G,c))| = |E(G)|$.

\begin{example}\label{example:tseitin_formula}
	Let $G$ be the following graph with $V(G) = \{1,2,3,4\}$. Let $c : V(G) \rightarrow \{0,1\}$ such that gray vertices have charge $0$ and white vertices have charge $1$.
\begin{minipage}{0.25\textwidth}
\begin{itemize}
\item[] $x_{12} \oplus x_{13} \oplus x_{14} = 0$
\item[] $x_{12} \oplus x_{23} = 1$
\item[] $x_{13} \oplus x_{23} \oplus x_{34} = 1$
\item[] $x_{14} \oplus x_{34} = 0$
\end{itemize}
\end{minipage}\begin{minipage}{0.25\textwidth}
		\begin{center}
			\begin{tikzpicture}[xscale=0.9, yscale=0.8]
				\def\s{1.5};
				\node[circle, fill=gray, inner sep=\s, label=left:{\small $1$}] (a) at (0,0) {}; 
				\node[circle, draw=black, inner sep=\s, label=above:{\small $2$}] (b) at (0.825,0.66) {}; 
				\node[circle, draw=black, inner sep=\s, label=right:{\small $3$}] (c) at (1.65,0) {}; 
				\node[circle, fill=gray, inner sep=\s, label=below:{\small $4$}] (d) at (0.825,-0.66) {}; 
				
				\draw (a) -- (b) -- (c) -- (d) -- (a) -- (c);
			\end{tikzpicture}
		\end{center}
	\end{minipage}
	\noindent The Tseitin formula for this graph and this charge $c$ is 
	\begin{multline*}
	T(G,c) = \xor(x_{12},x_{13},x_{14}) \land \xor(\bar x_{12},x_{23}) \\ \land
		\xor(\bar x_{13},x_{23},x_{34}) \land \xor(x_{14},x_{34}).
	\end{multline*}
\end{example}

Tseitin formulas were introduced by  \citet{Tseitin68,Tseitin83} in the 1960s as hard instances for proof systems, despite an easy criterion to decide their satisfiability~\cite[Lemma 4.1]{Urquhart87}. 
\citet{Urquhart87}  later showed that when $G$ belongs to the family of bounded-degree expander graphs (whose definition we omit), all resolution refutations of $T(G,c)$ require exponentially many clauses. This was generalized by Ben-Sasson and Widgerson who used their width-length relations on refutation proofs to derive exponential lower bounds parameterized on the edge expansion of $G$~\cite{BW01}. Beyond expansion, the key parameter to characterize the hardness of Tseitin formulas for resolution could be the treewidth of the graph. 
Treewidth is a very well-known graph parameter whose definition we omit (see~\cite{Bodlaender98}). Intuitively the treewidth of $G$, denoted by $tw(G)$, is an integer between $0$ and $|V(G)|$ that measures how close $G$ is to a tree (trees having treewidth~$1$). 
On the one hand, it was shown~\cite{AlekhnovichR11}  that
unsatisfiable Tseitin formulas have resolution refutations of length
at most $2^{O(\tw(G))}|E(G)|^{O(1)}$, thus a logarithmic
treewidth guarantees short refutations. On the other hand, combining
the width-length relation with Corollaries 8 and 16 of Galesi et al.'s \shortcite{GalesiTT20} yields the following:

\begin{theorem}\label{theorem:linear_tw_res}
	Let $G$ be an $n$-vertex graph whose maximum degree is bounded by a constant, if
	$\tw(G) = \Omega(n)$, then the length of a shortest resolution
	refutation of an unsatisfiable Tseitin formula $T(G,c)$ is at least
	$2^{\Omega(n)}$.
\end{theorem}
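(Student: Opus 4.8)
The plan is to reduce the desired length lower bound to a \emph{width} lower bound through the Ben-Sasson--Wigderson width-length relation (the size--width tradeoff), and then to derive the width lower bound from the treewidth hypothesis using the cited results of Galesi et al. Recall that the width-length relation states that for a CNF formula on $N$ variables whose initial clauses all have width at most $w_0$, every resolution refutation has length at least $\exp(\Omega((w-w_0)^2/N))$, where $w$ denotes the minimum width among all resolution refutations of the formula. It therefore suffices to establish three bounds for $T(G,c)$: that the number of variables is $N = O(n)$, that the initial clause width is $w_0 = O(1)$, and that the refutation width is $w = \Omega(n)$. Together these give length $\exp(\Omega(n^2/n)) = 2^{\Omega(n)}$.

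The first two bounds follow immediately from the bounded-degree assumption. If $\Delta = O(1)$ bounds the maximum degree of $G$, then $|E(G)| \le \Delta n / 2$, so $T(G,c)$ has $N = |E(G)| = O(n)$ variables; and since each vertex constraint involves at most $\Delta$ edges, its $\xor$ representation consists of clauses of width at most $\Delta = O(1)$, so $w_0 \le \Delta$.

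The crux is the width bound $w = \Omega(n)$, and this is where I would invoke Corollaries 8 and 16 of Galesi et al.~\shortcite{GalesiTT20}: these relate the resolution width of a Tseitin formula on a bounded-degree graph to the treewidth of that graph, yielding $w = \Omega(\tw(G))$, which combined with the hypothesis $\tw(G) = \Omega(n)$ gives $w = \Omega(n)$. This step is also the main obstacle, and it is where the precise form of the hypotheses matters: because the width-length relation loses a square, a sublinear width bound (say $\sqrt{n}$ or polylogarithmic) would yield only a subexponential length bound. Thus it is essential both that the width be \emph{linear} in $n$, which forces the assumption of linear treewidth rather than merely unbounded treewidth, and that the degree be bounded, which keeps $N$ linear and the width-to-treewidth constant under control. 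Substituting $w = \Omega(n)$, $w_0 = O(1)$, and $N = O(n)$ into the width-length relation then yields the claimed $2^{\Omega(n)}$ lower bound.
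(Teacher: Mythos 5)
Your proposal is correct and follows essentially the same route as the paper's proof: the Ben-Sasson--Wigderson size--width tradeoff applied with $N=|E(G)|=O(n)$ variables and initial clause width at most $\Delta=O(1)$, combined with Corollaries 8 and 16 of Galesi et al.\ to turn the linear-treewidth hypothesis into a linear lower bound on refutation width. The one detail you elide is that those corollaries bound the refutation width by $\max(\Delta,\tw(L(G)))$ in terms of the \emph{line graph} $L(G)$, so the paper additionally invokes the standard inequality $\tw(L(G)) \geq \frac{1}{2}(\tw(G)+1)-1$ to pass from $\tw(G)=\Omega(n)$ to $\tw(L(G))=\Omega(n)$; this does not change the substance of your argument.
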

\begin{proof}
Using~\cite[Corollary 3.6]{BW01}, the length of the shortest resolution refutation of any unsatisfiable CNF formula $F$ is at least $\exp(\Omega((w(F,\vdash) - w(F))^2/n))$ where $w(F)$ is the length of the longest clause in $F$, and $w(F,\vdash)$ is the largest $k$ such that every refutation of $F$ by resolution contains a clause of length $k$. Now let $\Delta = O(1)$ be the maximum degree of $G$. Then $w(T(G,c)) = \Delta$ and the length of the shortest resolution refutation of $T(G,c)$ is in $\exp(\Omega(w(T(G,c),\vdash)^2/n))$.

By~\cite[Corollaries 8 and 16]{GalesiTT20}, $w(T(G,c),\vdash) = \max(\Delta,\tw(L(G)))$ where $L(G)$ is the \emph{line graph} of $G$ (see~\cite[Definition 6]{GalesiTT20} if needed). It is well-known that $\tw(L(G)) \geq \frac{1}{2}(\tw(G)+1)-1$, see for instance~\cite{HarveyW18}. So $\tw(G) = \Omega(n)$ implies $\tw(L(G)) = \Omega(n)$, which implies $w(T(G,c),\vdash) = \Omega(n)$. And the theorem follows.
\end{proof}

\noindent Note that there is still a gray area for $\tw(G)$ less than linear, but more than logarithmic in $n$. Note also that Tseitin formulas are easily refutable in proof systems different from resolution, regardless of high treewidth~\cite{ItsyksonKRS20,BonacinaBL23}.

\section{Parity Problems}

For a constraint $x_1 \oplus \dots \oplus x_n = c$, the number of clauses in the canonical representation is exponential in $n$. 
We can use the $\xor$ notation to build larger parity constraints if we include auxiliary variables called Tseitin variables:

\begin{definition}
	Let $\sigma$ be a permutation of $n$ elements and $X= \SB x_i \SM 1
	\leq i \leq n\SE$ be an ordered set of literals. We define
	$\parity(X, T, \sigma)= \xor(x_{\sigma(1)}, x_{\sigma(2)}, t_1)\land
	{\bigwedge_{j=1}^{n-4}  \xor(t_{j}, {x_{\sigma(j+2)}},t_{j+1})}\land \xor({t}_{n-3},x_{\sigma(n-1)}, x_{\sigma(n)})$, 
	where $T= \SB t_i \SM i\leq n-3\SE$ are Tseitin variables. 
\end{definition}

Here $\parity(X, T, \sigma)$ is satisfiable if and only if the total parity of $X$ is $0$. If we wanted a constraint which is satisfiable if and only if the parity was $1$, again we simply flip a literal.
In our particular Tseitin encoding, we structure the~$\oplus$ linearly so that the formula for $n=5, \sigma=\id$ looks like $((((x_1 \oplus x_2)\oplus x_3)\oplus x_4) \oplus x_5)$. Here the formula depth is linear, however the structure does not affect the satisfiability as $\oplus$ is associative. Furthermore the actual permutation $\sigma$ does not affect the satisfiability because $\oplus$ is commutative. 

\subsection{Problem 1: Reordered Parity}
In our first problem we simply take two parity constraints that are in contradiction. 
We simultaneously state that the variables of $X$ have $0$ parity and $1$ parity.
This is obviously a contradiction.
However in order to make it difficult we use two different permutations to obscure the conflict. We define
\[
\rPar(n, \sigma) = \parity(X, S, \id) \wedge \parity(X', T, \sigma)
\] 
where $X=\SB x_i \SM 1 \leq i \leq n\SE$, $X'=\SB x_i \SM 1 \leq i <
n\SE \cup \{\bar x_n \}$, $\sigma$ is a permutation of $n$ elements, and $\id$ is the identity map. $S$, $T$, and $X$ are disjoint sets of variables. Note that $\rPar(n, \sigma)$ is a Tseitin formula where each $\xor$ constraint corresponds to a vertex of the underlying graph. This is because every variable appears in exactly two $\xor$ constraints. For $n$ fixed, its Tseitin graph depends only on $\sigma$ and its vertices all have charge $0$, except for the vertex corresponding to $\xor(t_{n-3},x_{\sigma(n-1)},\bar x_{\sigma_n})$.

\begin{fact}
$\rPar(n, \sigma)$ is an unsatisfiable Tseitin formula.
\end{fact}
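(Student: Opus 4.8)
The statement bundles two claims: that $\rPar(n,\sigma)$ is a Tseitin formula, and that it is unsatisfiable. I would prove them separately. \textbf{Tseitin structure.} The plan is to verify the defining incidence condition, that every variable occurs in exactly two $\xor$ constraints, by going through the three kinds of variables. Each $s_j$ is introduced in the $j$-th constraint of $\parity(X,S,\id)$ and reused in the $(j{+}1)$-th, so it occurs exactly twice and only within the identity block; by the same argument each $t_j$ occurs exactly twice within the $\sigma$ block. Each original variable $x_i$ occurs exactly once in $\parity(X,S,\id)$ (as a leaf of the chain) and, since $\sigma$ is a permutation, exactly once in $\parity(X',T,\sigma)$, for a total of two. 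Hence every variable appears in precisely two constraints, which is exactly the condition for a conjunction of $\xor$ constraints to be a Tseitin formula: the graph $G$ has one vertex per $\xor$ constraint and one edge per variable, joining the two constraints in which that variable occurs. Reading off the charges from the definition, the only negated literal in the whole formula is the $\bar x_n$ sitting in $X'$, so exactly one vertex of the $\sigma$ block carries charge $1$ and every other vertex carries charge $0$, as asserted in the surrounding text.

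\textbf{Unsatisfiability.} I would rely on the characterization stated just after the definition of $\parity$, read pointwise: for a fixed assignment to the shared variables, the Tseitin variables of a $\parity$ block are forced by telescoping along the chain, and a consistent extension exists if and only if the total parity of the literals fed into that block is $0$. Applying this to a hypothetical satisfying assignment $\rho$ of $\rPar(n,\sigma)$, whose two blocks share only the variables $X$ (recall $S$, $T$, $X$ are pairwise disjoint): the identity block forces $\bigoplus_{i=1}^{n}\rho(x_i)=0$, while the $\sigma$ block feeds in the literals of $X'$, whose parity is $\bigoplus_{i=1}^{n-1}\rho(x_i)\oplus\overline{\rho(x_n)}=\big(\bigoplus_{i=1}^{n}\rho(x_i)\big)\oplus 1$, and hence forces $\bigoplus_{i=1}^{n}\rho(x_i)=1$. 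These two requirements are contradictory, so no such $\rho$ exists.

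\textbf{An alternative route and the main subtlety.} An equivalent argument is to invoke Urquhart's satisfiability criterion for Tseitin formulas: $T(G,c)$ is unsatisfiable iff some connected component has odd total charge. Here I would check that $G$ is connected --- the $s_j$ and $t_j$ edges make the two internal chains connected, and every identity-block vertex contains at least one $x_i$ and is therefore linked to the $\sigma$ block by a crossing edge --- and then observe that the total charge equals $1$, which is odd. The incidence bookkeeping is routine; the one point that genuinely needs care is the parity accounting for the single negation, namely verifying that flipping exactly one literal ($x_n\mapsto\bar x_n$) shifts the forced parity of the $\sigma$ block by exactly $1$ regardless of where $\sigma$ places that literal. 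This is the crux that turns ``both blocks encode the same parity constraint'' into ``the two blocks encode opposite constraints,'' which is precisely what makes the conjunction contradictory rather than equivalent.
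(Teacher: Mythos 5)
Your proof is correct and follows essentially the same route as the paper, which justifies the Tseitin structure by the same incidence count (every variable occurs in exactly two $\xor$ constraints, with all charges $0$ except the one vertex receiving the flipped literal) and treats unsatisfiability as immediate from the two blocks encoding opposite parities of $X$. If anything, you are slightly more careful than the paper's surrounding text, which places the charge-$1$ vertex at the last constraint $\xor(t_{n-3},x_{\sigma(n-1)},\bar x_{\sigma(n)})$ even though the flipped literal $\bar x_n$ actually lands at the constraint determined by $\sigma^{-1}(n)$; your observation that the shift in forced parity is independent of where $\sigma$ places that literal is the right way to state it.
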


The version where the identity map is used for $\sigma$ is the most natural, and is easy to solve. However the random version can still appear from equivalences (which themselves are xors) obfuscating the random parity. For example: A system of equations containing
$x_1 \oplus x_2 \oplus x_3 \oplus x_4 = 0$ and
$x_5 \oplus x_6 \oplus x_7 \oplus x_8 = 1$
and other binary clauses implying: $x_5 \leftrightarrow x_4$ and $x_6 \leftrightarrow x_1$ and $x_7 \leftrightarrow x_2$ and $x_8 \leftrightarrow x_3$.
A standard CNF encoding using the natural variable ordering $x_1 < x_2 < x_3 < x_4 < x_5 < x_6 < x_7 < x_8$ (as done in the Dubois benchmark family) yields a non-trivial random parity problem after removing binary clauses, namely:
$x_1 \oplus x_2 \oplus x_3 \oplus x_4 = 0$ encoded in CNF using the ordering $x_1 < x_2 < x_3 < x_4$, and
$x_4 \oplus x_1 \oplus x_2 \oplus x_3 = 1$ encoded in CNF using the ordering $x_4 < x_1 < x_2 < x_3$.

\subsection{Problem 2: Random Parity Addition}
Problem 1 is a simple special case. 
In general, solvers and preprocessors want to deal with XOR-constraints by Gaussian elimination. In Gaussian elimination we use multiple steps involving adding two parity constraints together to get a third parity constraint. 
Since the parity constraints may have a large number of input variables we would have to use Tseitin variables, including in the sum constraint. 
We model the difficulty of an addition step by taking the three parity constraints: the two summands and the negation of the sum, in conjunction to get a contradiction.

Let $a= \parity(A,S, \sigma_a )$ and $b= \parity(B,T, \sigma_b )$,
where $A$ and $B$ are subsets of $X=\SB x_i \SM 1\leq i \leq n\SE$ and $S$ and $T$ are disjoint sets of Tseitin variables. We define 
\begin{multline*}
\rAddPar(a,b,\sigma_c) = \parity(A,S, \sigma_a ) \\ \land \parity(B,T, \sigma_b ) \land \parity (C,U,\sigma_c)
\end{multline*}
where $U$ is disjoint from $S$, $T$ and $X$. Here $C$ is the symmetric difference of $A$ and $B$ but with a literal flipped. In a first scenario, the variable ordering $\sigma_c$ for the sum constraint is independent of that used in $a$ and $b$. This is modeled fixing $\sigma_c$ to be the identity $\id$. For convenience we write $\rAddPar(a,b) = \rAddPar(a,b,\id)$. A more clever approach is to choose $\sigma_c$ favorably for $\sigma_a$ and $\sigma_b$. We will precise what we mean by ``choosing $\sigma_c$ favorably'' later in the paper.

Again $\rAddPar(a,b,\sigma_c)$ is a Tseitin formula since every variable appears in two $\xor$ constraints. Every variable appearing in $A$ and $B$ does not appear in the third constraint, and every variable in the symmetric difference of $A$ and $B$ appears a second time the third constraint. The Tseitin variables are disjoint so they also appear in exactly two $\xor$.

\begin{fact}
$\rAddPar(a,b,\sigma_c)$ is an unsatisfiable Tseitin formula.
\end{fact}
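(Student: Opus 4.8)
The plan is to establish the two assertions of the Fact separately: that $\rAddPar(a,b,\sigma_c)$ is a Tseitin formula, and that it is unsatisfiable. For the first part I would rely on the characterization already recalled in the text---a CNF built from $\xor$ constraints is a Tseitin formula exactly when every variable occurs in precisely two of the constraints---and carry out the occurrence bookkeeping carefully. For the second part I would avoid any case analysis on the permutations and instead give a direct algebraic argument: summing all the constraints modulo~$2$ collapses the formula to the assertion $0 = 1$.

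First I would check the occurrence count inside a single block $\parity(Y,V,\sigma)$. Inspecting the chain $\xor(y_{\sigma(1)},y_{\sigma(2)},v_1)\land\bigwedge_{j}\xor(v_j,y_{\sigma(j+2)},v_{j+1})\land\xor(v_{|V|},y_{\sigma(|Y|-1)},y_{\sigma(|Y|)})$ shows that every variable of $Y$ appears in exactly one clause and every Tseitin variable $v_j$ appears in exactly two consecutive clauses. I would then lift this to $\rAddPar$: since $S$, $T$, $U$ are pairwise disjoint and disjoint from $X$, each Tseitin variable keeps its two occurrences within its own block. For the $X$-variables, a variable in $A\cap B$ occurs once in the $a$-block and once in the $b$-block, while a variable in the symmetric difference $A\triangle B$ (the underlying variable set of $C$) occurs once in its summand block and once in the $C$-block. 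Hence every variable of the formula occurs in exactly two $\xor$ constraints, so the formula is Tseitin, with one graph vertex per constraint and one edge per variable; the maximum degree is $3$ because every $\xor$ clause in a chain has at most three variables.

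For unsatisfiability I would XOR together all of the parity constraints appearing in $\rAddPar(a,b,\sigma_c)$. Because each $\xor(\dots)=c$ constraint contributes the linear equation $\bigoplus(\text{its variables}) = c$, the sum over all constraints has a left-hand side in which every variable cancels: each Tseitin variable cancels against its two occurrences in one block, each variable of $A\cap B$ cancels between the $a$- and $b$-blocks, and each variable of $A\triangle B$ cancels between its summand block and the $C$-block. The left-hand side is therefore identically $0$. The right-hand side is the total charge $c_a + c_b + c_c \pmod 2$, where $c_a,c_b,c_c$ are the parities enforced by the three blocks. The literal flipped in $C$ toggles exactly the charge of one vertex of the $C$-block, forcing $c_c = c_a + c_b + 1 \pmod 2$, so the total charge is $c_a + c_b + (c_a + c_b + 1) = 1 \pmod 2$. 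Summing all constraints thus yields $0 = 1$, and no assignment can satisfy the conjunction; this conclusion is independent of $\sigma_a,\sigma_b,\sigma_c$, as expected since $\oplus$ is commutative and associative.

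The bookkeeping is routine, so I expect the only genuinely delicate point to be the charge accounting around the flipped literal: I must confirm that flipping a single literal of $C$ changes the parity demanded of the $C$-block by exactly one, and that this is enough to make the \emph{total} charge odd regardless of the individual values $c_a$ and $c_b$. Phrasing unsatisfiability through the collapse to $0=1$---rather than through Urquhart's per-component odd-charge criterion---has the advantage of sidestepping any need to argue that the Tseitin graph is connected.
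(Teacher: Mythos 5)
Your proposal is correct and matches the paper's (largely implicit) justification: the paper establishes the Tseitin property by the same occurrence count---each Tseitin variable twice within its block, each variable of $A\cap B$ once in $a$ and once in $b$, each variable of $A\triangle B$ once in a summand and once in the $C$-block---and leaves unsatisfiability to the construction, where your mod-$2$ summation collapsing to $0=1$ is exactly the intended argument (it is the easy direction of Urquhart's charge criterion, and as you note it needs no connectivity assumption). Your write-up is simply more explicit than the paper's, which states the Fact without a displayed proof.
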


\section{The Graph Model and Lower Bounds}\label{sec:lower}

In this section we present our lower bounds on the length of resolution refutations for $\rPar$ and $\rAddPar$ when they are constructed in a random fashion.

\subsection{Lower Bounds for Reordered Parity}

In the following, we denote the set $\{1,\dots,n\}$ by $[n]$ and we call $\mathfrak{S}_n$ the set of permutations of $[n]$.
Consider $2n$ vertices labeled $1,\dots,n,1',\dots,n'$, a permutation
$\sigma \in \mathfrak{S}_n$, and let $G_\sigma$ be the graph over
these vertices whose edge set is $\SB (i,i+1) \SM i < n\SE \cup \SB
(\sigma(i)',\sigma(i+1)') \SM i < n\SE \cup \SB (i,i') \SM i \leq n\SE$. Let $G^*_\sigma$ be the multigraph obtained by contracting the edges $(i,i')$ for all $i \in [n]$. That is, $V(G^*_\sigma) = [n]$ and, for every edge $(i,j)$, $(i',j)$, $(i,j')$ or $(i',j')$ in $E(G)$, we add an edge $(i,j)$ to $E(G^*_\sigma)$.

\begin{example}\label{example:G_sigma}
	Let $n = 5$ and  $\sigma(1) = 3$, $\sigma(2) = 1$, $\sigma(3) = 5$, $\sigma(4) = 4$, $\sigma(5) = 2$. The graph $G_\sigma$ and $G^*_\sigma$ are:

			$$G_\sigma = \raisebox{-0.5\height}{
			\begin{tikzpicture}[xscale=0.9, yscale=0.8, every node/.style={scale=0.9}]
				\def\x{1.1};
				\def\y{0.7};
				\def\s{1.5};

				\node[circle,fill=black, inner sep=\s, label=left:{$1$}] (1) at (0,0) { };
				\node[circle,fill=black, inner sep=\s, label=left:{$2$}] (2) at (0,-1*\y) { };
				\node[circle,fill=black, inner sep=\s, label=left:{$3$}] (3) at (0,-2*\y) { };
				\node[circle,fill=black, inner sep=\s, label=left:{$4$}] (4) at (0,-3*\y) { };
				\node[circle,fill=black, inner sep=\s, label=left:{$5$}] (5) at (0,-4*\y) { };  
				
				\node[circle,fill=black, inner sep=\s, label=right:{$3'$}] (1') at (\x,0) { };
				\node[circle,fill=black, inner sep=\s, label=right:{$1'$}] (2') at (\x,-1*\y) { };
				\node[circle,fill=black, inner sep=\s, label=right:{$5'$}] (3') at (\x,-2*\y) { };
				\node[circle,fill=black, inner sep=\s, label=right:{$4'$}] (4') at (\x,-3*\y) { };
				\node[circle,fill=black, inner sep=\s, label=right:{$2'$}] (5') at (\x,-4*\y) { };  
				
				\draw (1) -- (2) -- (3) -- (4) --(5);
				\draw (1') -- (2') -- (3') -- (4') --(5');
				
				\draw (1) -- (2');
				\draw (2) -- (5');
				\draw (3) -- (1');
				\draw (4) -- (4');
				\draw (5) -- (3');
			\end{tikzpicture}} = 
			\raisebox{-0.5\height}{\begin{tikzpicture}[xscale=0.9, yscale=0.8, every node/.style={scale=0.9}]
				\def\x{1.1};
				\def\y{0.7};
				\def\s{1.5};

				\node[circle,fill=black, inner sep=\s, label=left:{$1$}] (1) at (0,0) { };
				\node[circle,fill=black, inner sep=\s, label=left:{$2$}] (2) at (0,-1*\y) { };
				\node[circle,fill=black, inner sep=\s, label=left:{$3$}] (3) at (0,-2*\y) { };
				\node[circle,fill=black, inner sep=\s, label=left:{$4$}] (4) at (0,-3*\y) { };
				\node[circle,fill=black, inner sep=\s, label=left:{$5$}] (5) at (0,-4*\y) { };  
				
				\node[circle,fill=black, inner sep=\s, label={[label distance=0.3cm]right:{$3'$}}] (1') at (\x,-2*\y) { };
				\node[circle,fill=black, inner sep=\s, label={[label distance=0.3cm]right:{$1'$}}] (2') at (\x,0*\y) { };
				\node[circle,fill=black, inner sep=\s, label={[label distance=0.3cm]right:{$5'$}}] (3') at (\x,-4*\y) { };
				\node[circle,fill=black, inner sep=\s, label={[label distance=0.3cm]right:{$4'$}}] (4') at (\x,-3*\y) { };
				\node[circle,fill=black, inner sep=\s, label={[label distance=0.3cm]right:{$2'$}}] (5') at (\x,-1*\y) { };  
				
				\draw (1) -- (2) -- (3) -- (4) --(5);
				\draw (1') to [out=50, in=-50] (2');
				\draw (2') to [out=-115, in=115] (3');
				\draw (3') to [out=50, in=-50] (4');
				\draw (4') to [out=50, in=-50] (5');
				
				\draw (1) -- (2');
				\draw (2) -- (5');
				\draw (3) -- (1');
				\draw (4) -- (4');
				\draw (5) -- (3');
			\end{tikzpicture}}
			\quad G^*_\sigma = 
			\raisebox{-0.5\height}{\begin{tikzpicture}[xscale=0.9, yscale=0.8, every node/.style={scale=0.9}]
				\def\x{1};
				\def\y{0.7};
				\def\s{1.5};
				\node[circle,fill=black, inner sep=\s, label={[label distance=0.3cm]left:{$1$}}] (1a) at (10,0) { };
				\node[circle,fill=black, inner sep=\s, label={[label distance=0.3cm]left:{$2$}}] (2a) at (10,-1*\y) { };
				\node[circle,fill=black, inner sep=\s, label={[label distance=0.3cm]left:{$3$}}] (3a) at (10,-2*\y) { };
				\node[circle,fill=black, inner sep=\s, label={[label distance=0.3cm]left:{$4$}}] (4a) at (10,-3*\y) { };
				\node[circle,fill=black, inner sep=\s, label={[label distance=0.3cm]left:{$5$}}] (5a) at (10,-4*\y) { };

				\draw (1a) -- (2a) -- (3a) -- (4a) --(5a);
				\draw (3a) to [out=50, in=-50] (1a);
				\draw (1a) to [out=-115, in=115] (5a);
				\draw (5a) to [out=50, in=-50] (4a);
				\draw (4a) to [out=50, in=-50] (2a);
			\end{tikzpicture}}$$
\end{example} 
\medskip
The maximum degree of a vertex of $G_\sigma$ (resp. $G^*_\sigma$) is $3$ (resp. $4$). Since $G^*_\sigma$ is a minor of $G_\sigma$ after merging of the parallel edges, we have that $\tw(G^*_\sigma) \leq \tw(G_\sigma)$. We also have a bound in the other direction which may be useful when $\tw(G_\sigma)$ is harder to compute than $\tw(G^*_\sigma)$ in practice.

\begin{lemma}\label{lemma:tw_after_contraction}
	We have that $\frac{1}{2}\tw(G_\sigma) \leq \tw(G^*_\sigma) \leq \tw(G_\sigma)$. 
\end{lemma}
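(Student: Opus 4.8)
The plan is to prove the two inequalities separately. The upper bound $\tw(G^*_\sigma) \le \tw(G_\sigma)$ is immediate from the discussion preceding the lemma: $G^*_\sigma$ is obtained from $G_\sigma$ by contracting the matching edges $(i,i')$ and discarding the resulting parallel copies, so it is a minor of $G_\sigma$, and treewidth is minor-monotone. Hence the substance of the lemma is the lower bound $\tfrac12\tw(G_\sigma) \le \tw(G^*_\sigma)$, which I would obtain by converting an optimal tree decomposition of $G^*_\sigma$ into one of $G_\sigma$ whose width is at most roughly twice as large.

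Concretely, I would fix an optimal tree decomposition $(T,\lambda)$ of $G^*_\sigma$ of width $w=\tw(G^*_\sigma)$ and ``un-contract'' it. Each vertex $i \in [n]$ of $G^*_\sigma$ is the image of the two vertices $i$ and $i'$ of $G_\sigma$ that were identified when the edge $(i,i')$ was contracted. So I keep the same tree $T$ and define $\lambda'(t) = \lambda(t) \cup \{\, i' : i \in \lambda(t)\,\}$; that is, whenever a bag of $(T,\lambda)$ contains a vertex $i$, the new bag contains both preimages $i$ and $i'$.

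Then I would verify that $(T,\lambda')$ is a valid tree decomposition of $G_\sigma$. Vertex coverage is clear, since each $i$ (and hence its partner $i'$) occurs in some bag of $(T,\lambda)$. For edge coverage there are three cases, matching the three parts of $E(G_\sigma)$: a left-path edge $(i,i+1)$ is covered because the corresponding edge of $G^*_\sigma$ lies in some $\lambda(t)$; a right-path edge $(\sigma(i)',\sigma(i+1)')$ is covered because its image $(\sigma(i),\sigma(i+1))$ is an edge of $G^*_\sigma$ and so lies in some $\lambda(t)$, whence $\lambda'(t)$ contains both primed endpoints; and a matching edge $(i,i')$ is covered by any bag with $i \in \lambda(t)$, which now also contains $i'$. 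Connectivity holds because the set of bags containing $i'$ equals the set containing $i$, which is connected in $T$, while the bags containing the unprimed $i$ are unchanged. The only genuinely non-routine point here is checking that the right-path edges---whose endpoints are primed and permuted by $\sigma$---really are all covered; this is exactly what the identification defining $G^*_\sigma$ guarantees.

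Finally, I would bound the width: $|\lambda'(t)| = 2|\lambda(t)| \le 2(w+1)$, so $(T,\lambda')$ has width at most $2(w+1)-1 = 2\tw(G^*_\sigma)+1$, giving $\tw(G_\sigma) \le 2\tw(G^*_\sigma)+1$. This is the claimed inequality $\tfrac12\tw(G_\sigma)\le\tw(G^*_\sigma)$ up to a harmless additive constant, which does not matter for the linear-treewidth application the lemma is used for. The main thing to be careful about is this width accounting together with the right-path edge-coverage check above; the rest is bookkeeping on the identification $i \leftrightarrow \{i,i'\}$, and the factor $2$ is the real content.
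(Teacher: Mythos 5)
Your proposal is correct and follows essentially the same route as the paper's proof: take an optimal tree decomposition of $G^*_\sigma$, replace each bag $\lambda(t)$ by $\lambda(t) \cup \{i' : i \in \lambda(t)\}$, and check the three tree-decomposition conditions, with the right-path edges covered precisely because their images under the contraction are edges of $G^*_\sigma$. You are in fact slightly more careful than the paper on the width accounting (the doubled bags give width $2\tw(G^*_\sigma)+1$ rather than $2\tw(G^*_\sigma)$, an off-by-one the paper glosses over and which, as you note, is irrelevant to the linear-treewidth application).
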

\iftrue
\begin{proof}
	We only have to show that $\tw(G_\sigma) \leq
	2\cdot \tw(G^*_\sigma)$. Let $(T,\lambda^*)$ be a tree decomposition
	of $G^*_\sigma$ and let $\lambda$ be defined by, $\forall t \in T$, $\lambda(t) = \SB i \SM i \in
	\lambda^*(t) \SE \cup \SB i' \SM i \in \lambda^*(t)\SE$. We
        show that $(T,\lambda)$ is a tree decomposition of
        $G_\sigma$. First it is clear that $\bigcup_{t \in T}
        \lambda(t) = V(G_\sigma)$. Second, for every $(i,j)$,
        $(i,j')$, $(i',j)$ or $(i',j')$ in $G_\sigma$ there is a $t
        \in T$ such that $\lambda^*(t)$ contains $(i,j)$ so the four
        aforementioned edges belong to $\lambda(t)$. Finally, for
        every $i \in [n]$ we have that $T[\SB t \SM i \in
        \lambda(t)\SE] = T[\SB t \SM i' \in \lambda(t)\SE] = T[\SB t
        \SM i \in \lambda^*(t)\SE]$ and we know that $T[\SB t \SM i \in \lambda^*(t)\SE]$ is connected. The width of $(T,\lambda)$ is twice that of $(T,\lambda^*)$, so $\tw(G_\sigma) \leq 2\cdot \tw(G^*_\sigma)$.
\end{proof}
\fi
The Tseitin graph of $\rPar(n, \sigma)$ is not exactly $G_\sigma$. The two graphs would be the same if we were to slightly modify the first and last constraints of $\parity(X,S,\id)$ and $\parity(X',T,\sigma)$ by replacing $\xor(x_1,x_2,s_1)$ by $\xor(x_1,\bar s_0) \land \xor(s_0,x_2,s_1)$ etc.

\begin{lemma}
The Tseitin graph of $\rPar(n,\sigma)$ is obtained by contracting four edges of $G_\sigma$.
\end{lemma}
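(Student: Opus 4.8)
The plan is to read $G_\sigma$ as the Tseitin graph of the \emph{split} formula obtained from $\rPar(n,\sigma)$ by breaking its first and last constraints exactly as in the remark preceding the statement, and then to check that contracting four edges glues each split pair back into a single constraint. Since the statement concerns only the underlying graph, I would work throughout with the incidence structure (which variable lies in which pair of constraints) and ignore the charges entirely.

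First I would make the correspondence explicit. Writing the split version of $\parity(X,S,\id)$ as the chain $\xor(x_1,\bar s_0),\ \xor(s_0,x_2,s_1),\ \dots,\ \xor(s_{n-3},x_{n-1},s_{n-2}),\ \xor(s_{n-2},x_n)$ yields $n$ constraints forming a path whose backbone edges are the Tseitin variables $s_0,\dots,s_{n-2}$ and whose $i$-th vertex carries the pendant variable $x_i$. I would identify this path with the vertices $1,\dots,n$ and the edges $(i,i+1)$ of $G_\sigma$, where $(i,i+1)=s_{i-1}$. Doing the same for the split version of $\parity(X',T,\sigma)$ identifies its path with the primed path $\sigma(1)',\dots,\sigma(n)'$ of $G_\sigma$: its $i$-th constraint, which contains $x_{\sigma(i)}$, corresponds to vertex $\sigma(i)'$, and consecutive constraints share $t_{i-1}=(\sigma(i)',\sigma(i+1)')$. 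Each pendant variable $x_i$ occurs in vertex $i$ on one side and in the vertex labelled $i'$ on the other, which is precisely the matching edge $(i,i')$; so the split formula has Tseitin graph $G_\sigma$.

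Next I would perform the un-splitting by contraction. The four edges to contract are the two extreme backbone edges of each path: $(1,2)$ and $(n-1,n)$ on the unprimed side and $(\sigma(1)',\sigma(2)')$ and $(\sigma(n-1)',\sigma(n)')$ on the primed side. Contracting $(1,2)=s_0$ merges vertices $1$ and $2$; since these share only $s_0$, contraction removes $s_0$ and leaves the merged vertex incident to $\{x_1,x_2,s_1\}$ (the symmetric difference of the two incident variable sets), which is exactly the unsplit constraint $\xor(x_1,x_2,s_1)$. The same computation for $(n-1,n)=s_{n-2}$ recovers $\xor(s_{n-3},x_{n-1},x_n)$, and the two primed contractions recover $\xor(x_{\sigma(1)},x_{\sigma(2)},t_1)$ and $\xor(t_{n-3},x_{\sigma(n-1)},x_{\sigma(n)})$. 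Thus the contracted graph is the Tseitin graph of $\rPar(n,\sigma)$; a vertex count confirms this, since $G_\sigma$ has $2n$ vertices and the Tseitin graph of $\rPar(n,\sigma)$ has $2(n-2)=2n-4$.

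The step requiring the most care is the incidence bookkeeping after contraction: I must confirm that the shared Tseitin variable is the \emph{only} edge removed, so that the merged vertex's variable set is the full set of the original first/last $\xor$, and that the four chosen edges are pairwise vertex-disjoint (for $n\ge 4$), so the contractions are independent and commute. The charges resulting from contraction differ from those of the genuine $\rPar$ constraints, but this is immaterial, since the lemma asserts an isomorphism of graphs rather than of charged Tseitin formulas.
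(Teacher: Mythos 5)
Your proof is correct and follows exactly the same route as the paper, whose proof is only the one-line sketch ``contract the edges $(1,2)$, $(n-1,n)$, $(\sigma(1)',\sigma(2)')$ and $(\sigma(n-1)',\sigma(n)')$ of $G_\sigma$''; you simply fill in the incidence bookkeeping and the vertex count that the authors leave implicit.
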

\begin{proof}[Proof sketch]
Contract the  edges $(1,2)$, $(n-1,n)$, $(\sigma(1)',\sigma(2)')$ and $(\sigma(n-1)',\sigma(n)')$ of $G_\sigma$. 
\end{proof}

Since an edge contraction can only decrease the treewidth by one, it follows that the treewidth of the Tseitin graph of $\rPar(n,\sigma)$ is at least $\tw(G_\sigma) - 4$. But then we show that when $\sigma$ is sampled uniformly at random from $\mathfrak{S}_n$, with high probability (i.e., with probability tending to $1$ as $n$ increases) both $G_\sigma$ and $\rPar(n,\sigma)$ have linear treewidth.

\begin{lemma}\label{lemma:expander_whp}
	There is a constant $\alpha > 0$ such that 
	$\textup{Pr}(\tw(G_\sigma) < \alpha n)$ vanishes to $0$ as $n$ increases when $\sigma$ is chosen uniformly at random in $\mathfrak{S}_n$.
\end{lemma}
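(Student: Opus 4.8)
\emph{Approach.} By Lemma~\ref{lemma:tw_after_contraction} we have $\tw(G^*_\sigma) \le \tw(G_\sigma)$, so it suffices to prove the linear lower bound for the contracted multigraph $G^*_\sigma$. The plan is to first describe $G^*_\sigma$ concretely: contracting every matching edge $(i,i')$ merges $i'$ into $i$, so the left path $1\text{-}2\text{-}\cdots\text{-}n$ becomes the Hamiltonian path $P_{\id}$ on $[n]$ with edges $(i,i+1)$, while the right path $\sigma(1)'\text{-}\cdots\text{-}\sigma(n)'$ becomes the Hamiltonian path $P_\sigma$ with edges $(\sigma(i),\sigma(i+1))$. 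Thus $G^*_\sigma$ is the union of the fixed path $P_{\id}$ and a uniformly random Hamiltonian path $P_\sigma$ on $[n]$, of maximum degree $4$. I would then lower bound its treewidth through balanced separators: if $\tw(G^*_\sigma) < k$ then, under uniform vertex weights, $G^*_\sigma$ admits a set $S$ with $|S|\le k+1$ whose removal leaves every component of size $\le n/2$, and greedily packing these components into two bins yields a partition $[n]\setminus S = A \sqcup B$ with no edge between $A$ and $B$ and $|A|,|B| \le \tfrac23 n$. So it is enough to show that, with high probability, $G^*_\sigma$ has \emph{no} such balanced separator of size below $\alpha n$; this forces $\tw(G^*_\sigma) \ge \alpha n - 1 = \Omega(n)$, and hence the same bound for $G_\sigma$.

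\emph{The counting.} Fix $s = |S| \le \alpha n$. The crux is that both paths must respect the cut $(A,B)$, and the two paths play different roles. The fixed path $P_{\id}$ limits the number of candidate bipartitions: deleting $S$ from $P_{\id}$ leaves at most $s+1$ intervals of $[n]\setminus S$, and ``no $A$-$B$ edge on $P_{\id}$'' forces each interval to be monochromatic, so there are at most $2^{s+1}$ bipartitions consistent with $P_{\id}$. The random path $P_\sigma$ then makes each fixed candidate unlikely: for fixed $S$, $A$, $B$ with $|A|,|B| \ge \gamma n$, the probability over $\sigma$ that the ordering $\sigma(1),\dots,\sigma(n)$ contains no $A$-$B$ adjacency equals the fraction of arrangements of the label multiset ($a$ A's, $b$ B's, $s$ S's) with no $A$-$B$ adjacency; since every maximal non-$S$ run of such an arrangement is monochromatic, this fraction is at most $2^{s+1}/\binom{n-s}{a}$. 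A union bound over the $\binom{n}{s}$ choices of $S$ and the $\le 2^{s+1}$ consistent bipartitions gives a failure probability whose base-$2$ exponent is, up to $O(\log n)$, about $n\big(H(\alpha) + 2\alpha - (1-\alpha)H(\tfrac{\gamma}{1-\alpha})\big)$, where $H$ is the binary entropy and $\gamma = \tfrac13 - \alpha$ (the worst balance allowed by the $\tfrac23 n$ bound).

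\emph{Closing the bound, and the main obstacle.} As $\alpha \to 0$ this exponent tends to $-H(1/3) \approx -0.918 < 0$, so by continuity there is a constant $\alpha > 0$ for which it is strictly negative; summing the geometrically dominated contributions over all $s \le \alpha n$ then shows $\Prb{\tw(G^*_\sigma) < \alpha n} \to 0$, and the lemma follows. I expect the delicate point to be exactly this closing step: a naive union bound over \emph{all} bipartitions would contribute an entropy term $\approx n$ that swamps the per-candidate probability, so it is essential to exploit $P_{\id}$ to cut the number of candidate separations down to $2^{O(s)}$, and then to choose $\alpha$ small enough that the surviving entropy $H(\alpha)+O(\alpha)$ stays below the gain $(1-\alpha)H(\gamma/(1-\alpha))$ supplied by $P_\sigma$. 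Equivalently, the whole argument can be phrased as showing that the union of two random Hamiltonian paths is a vertex expander with high probability and invoking the standard fact that bounded-degree expanders have linear treewidth; I would nevertheless keep the explicit separator count, since it is self-contained and yields an explicit constant $\alpha$.
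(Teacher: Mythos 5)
Your proof is correct, but it takes a genuinely different route from the paper. The paper's proof is a reduction to known results on random regular graphs: it observes that closing the two paths of $G^*_\sigma$ into cycles gives a sample from the superposition model $\mathcal{H}_n\oplus\mathcal{H}_n$ (changing the treewidth by at most~$2$), invokes Kim and Wormald's contiguity theorem to transfer to the uniform $4$-regular model $\mathcal{G}_{4,n}$, and then combines Friedman's second-eigenvalue bound with Chandran and Subramanian's spectral lower bound on treewidth to get $\tw = \Omega(n)$ a.a.s. You instead give a self-contained first-moment argument: reduce treewidth to the existence of a $\tfrac{2}{3}$-balanced separator of size $k+1$ (standard, together with the grouping of components of size $\le n/2$ into two parts of size $\le \tfrac{2}{3}n$), use the \emph{fixed} path $P_{\id}$ to cut the number of candidate bipartitions for a given $S$ down to $2^{s+1}$, and use the \emph{random} path $P_\sigma$ to bound the probability of each surviving candidate by $2^{s+1}/\binom{n-s}{a}$ via the monochromatic-runs count; the resulting entropy comparison $H(\alpha)+2\alpha < (1-\alpha)H\bigl(\gamma/(1-\alpha)\bigr)$ with $\gamma=\tfrac13-\alpha$ indeed closes for small $\alpha$ since the right-hand side tends to $H(1/3)>0$. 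The trade-off is clear: the paper's argument is three lines modulo three nontrivial citations (and needs the small cycle-versus-path patch), while yours is elementary, yields an explicit constant $\alpha$, exploits the asymmetry between the deterministic and random paths directly, and handles the path model natively; your concluding observation that the same computation establishes vertex expansion of $P_{\id}\cup P_\sigma$ is also consistent with the expansion-based lower bounds of Ben-Sasson and Wigderson cited elsewhere in the paper, though it is not the mechanism the paper uses for this lemma.
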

\begin{proof}
  \citet{KimW01} have studied the graph distribution
  $\mathcal{H}_n \oplus \mathcal{H}_n$ where each graph over $n$
  vertices is the superposition of two independent hamiltonian cycles
  over these vertices (merging parallel edges). They call
  $\mathcal{G}_{4,n}$ the uniform distribution over all 4-regular
  graphs over $n$ vertices. They show that any sequence of events is
  true asymptotically almost surely (a.a.s.)  in
  $\mathcal{H}_n \oplus \mathcal{H}_n$ if and only if it is true
  a.a.s. in $\mathcal{G}_{4,n}$ \cite[Theorem 2]{KimW01}.
	
	The treewidth of a random 4-regular graph from $\mathcal{G}_{4,n}$ is linear in $n$ with high probability~\cite{ChandranS03}. On the one hand, Chandran and Subramanian (\citeyear{ChandranS03}) have shown that the treewidth of a $d$-regular graph $G$ is at least $\lfloor \frac{3n}{4}\frac{(d-\lambda_2(G))}{(3d-2\lambda_2(G))}\rfloor -1$ where $\lambda_2(G)$ is the second largest eigenvalue of the adjacency matrix of~$G$. On the other hand, Friedman has shown that, for any fixed $\varepsilon > 0$ and any $d \geq 2$, $|\lambda_2(G)| \leq 2\sqrt{d-1} + \varepsilon$ holds with high probability when $G \in \mathcal{G}_{d,n}$~\cite[Corollary 1.4]{Friedman03}\footnote{Note that our $\mathcal{G}_{d,n}$ is Friedman's $\mathcal{K}_{d,n}$}. The combination of the two results yields that, with high probability when $G \in \mathcal{G}_{4,n}$, $\tw(G) \geq \Omega(n)$.
	
	Thus, with high probability, a random graph $G \in \mathcal{H}_n \oplus \mathcal{H}_n$ has linear treewidth. Now $G^*_\sigma$ is not the superposition of two independent hamiltonian cycles but the superposition of two independent paths. But if we close both paths before superposition, then we obtain a graph in $\mathcal{H}_n \oplus \mathcal{H}_n$ whose treewidth is at least $\tw(G^*_\sigma) - 2$. So with high probability $\tw(G_\sigma) \geq \tw(G^*_\sigma) \geq \Omega(n)$ holds.
\end{proof}

$\rPar(n,\sigma)$'s graph has degree at most $4$ and linear treewidth with high probability so, by Theorem~\ref{theorem:linear_tw_res} we immediately have that, with high probability, the Tseitin formula is hard for resolution.

\setcounter{theorem}{0}

\begin{theorem}\label{theorem:hard_formulas}
	There is a constant $\alpha > 0$ such that, with probability tending to $1$ as $n$ increases, the length of a shortest resolution refutation of $\rPar(n, \sigma)$ where $\sigma$ is chosen uniformly at random in $\mathfrak{S}_n$, is least $2^{\alpha n}$.  
\end{theorem}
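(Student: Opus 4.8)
The plan is to assemble the theorem directly from the structural and probabilistic lemmas proved above; once those are in hand, the statement is essentially a corollary. First I would fix notation: write $\rPar(n,\sigma) = T(H_\sigma,c)$, where $H_\sigma$ denotes the Tseitin graph of $\rPar(n,\sigma)$ and $c$ is the charge described in the problem definition. By the Fact that $\rPar(n,\sigma)$ is an \emph{unsatisfiable} Tseitin formula, the hypotheses of Theorem~\ref{theorem:linear_tw_res} will apply to $T(H_\sigma,c)$ as soon as I can certify that $H_\sigma$ has bounded maximum degree and treewidth linear in its number of vertices.

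Second, I would record the size and degree of $H_\sigma$. Since $G_\sigma$ has $2n$ vertices and maximum degree $3$, and since $H_\sigma$ is obtained from $G_\sigma$ by contracting four (disjoint) edges by the preceding Lemma, the graph $H_\sigma$ has $2n-4 = \Theta(n)$ vertices and maximum degree at most $4$ (as already noted in the text preceding the statement). This supplies the bounded-degree hypothesis of Theorem~\ref{theorem:linear_tw_res} and, crucially, ensures that ``linear in $n$'' and ``linear in $|V(H_\sigma)|$'' agree up to constants.

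Third, I would transfer the treewidth bound through the contractions. Because a single edge contraction lowers treewidth by at most one, contracting four edges yields $\tw(H_\sigma) \geq \tw(G_\sigma) - 4$. By Lemma~\ref{lemma:expander_whp} there is a constant $\alpha_0 > 0$ with $\textup{Pr}(\tw(G_\sigma) < \alpha_0 n) \to 0$ as $n$ increases; on the complementary event we obtain $\tw(H_\sigma) \geq \alpha_0 n - 4 = \Omega(n) = \Omega(|V(H_\sigma)|)$. Hence, with probability tending to $1$, the graph $H_\sigma$ meets every hypothesis of Theorem~\ref{theorem:linear_tw_res}, and I would conclude that the shortest resolution refutation of $\rPar(n,\sigma) = T(H_\sigma,c)$ has length at least $2^{\alpha n}$ for a suitable constant $\alpha > 0$, absorbing the additive $-4$ and the constant hidden in the $\Omega$ of Theorem~\ref{theorem:linear_tw_res}.

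No serious obstacle remains at this stage: all of the genuine difficulty has been isolated into Lemma~\ref{lemma:expander_whp}, whose proof rests on the Kim--Wormald equivalence with random $4$-regular graphs together with the Chandran--Subramanian and Friedman spectral estimates. The only points that require care are the bookkeeping ones above, namely checking that the four contractions preserve bounded degree while costing only an additive constant in treewidth, and observing that the probabilistic bound is phrased in terms of $n$ whereas Theorem~\ref{theorem:linear_tw_res} is phrased in terms of the number of vertices of the Tseitin graph, which here is $\Theta(n)$.
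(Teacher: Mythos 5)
Your proposal is correct and follows essentially the same route as the paper: the paper likewise combines the contraction lemma for the Tseitin graph of $\rPar(n,\sigma)$ (losing at most $4$ in treewidth), Lemma~\ref{lemma:expander_whp} for the linear treewidth of $G_\sigma$ with high probability, and the bounded degree of the graph, and then invokes Theorem~\ref{theorem:linear_tw_res}. Your extra bookkeeping about the vertex count being $\Theta(n)$ and the degree after contraction is a welcome, if routine, addition to what the paper leaves implicit.
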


\subsection{Lower Bounds for Random Parity Addition}

The symmetric difference of two subsets $A$ and $B$ of $X =
\{x_1,\dots,x_n\}$ is denoted by $A \triangle B := (A \cup B)
\setminus (A \cap B)$. Recall that for $a = \parity(A,S, \sigma_a )$
and $b= \parity(B,T, \sigma_b )$, the formula $\rAddPar(a,b) = a \land
b \land \parity(C,U,\id)$ is a Tseitin formula (where $U \cap S = U
\cap T = \emptyset$ and $C$ is $A \triangle B$ with one literal
flipped). Let us describe $\rAddPar(a,b)$'s Tseitin graph. We call $H$ the graph whose vertices are split into three sets $V = \SB i \SM x_i \in A\SE$, $V'
= \SB i' \SM x_i \in B\SE$ and $V'' = \SB i'' \SM x_i \in A \triangle B\SE$. The edge set of $H$ contains $(i,i')$ for all $x_i \in A \cap B$, and  $(i,i'')$ for all $x_i \in A \cap (A\triangle B)$, and $(i',i'')$ for all $x_i \in B \cap (A\triangle B)$. The vertices in $V$ (resp. $V'$ and $V''$) are also connected in a path following the order $\sigma_a$ (resp. $\sigma_b$ and $\id$). $H$ is not exactly the Tseitin graph of $\rAddPar$, but is close enough and easier to analyze.

\begin{lemma}\label{lemma:tw_after_contraction_2}
For $a = \parity(A,S, \sigma_a )$ and $b= \parity(B,T,$ $\sigma_b )$ the Tseitin graph of $\rAddPar(a,b)$ is obtained by contracting six edges of the graph $H$.
\end{lemma}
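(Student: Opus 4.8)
The plan is to generalize the argument sketched for $\rPar$ (the lemma contracting four edges of $G_\sigma$) from two parity constraints to three. I would first make explicit how $H$ is the \emph{refined} Tseitin graph in which every $\xor$ gadget carries exactly one external variable. Recall that in $\parity(X,T,\sigma)$ with $|X|=m$ the gadgets form a path linked by the Tseitin variables, but the first gadget $\xor(x_{\sigma(1)},x_{\sigma(2)},t_1)$ and the last gadget $\xor(t_{m-3},x_{\sigma(m-1)},x_{\sigma(m)})$ each carry two external variables, while every middle gadget carries exactly one. If instead we split these two endpoint gadgets---replacing $\xor(x_{\sigma(1)},x_{\sigma(2)},t_1)$ by $\xor(x_{\sigma(1)},\bar t_0)\land\xor(t_0,x_{\sigma(2)},t_1)$ and symmetrically at the other end---then the gadgets of the refined constraint form a path with exactly one vertex per external variable, ordered by $\sigma$. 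This refined path is precisely the $V$-path (ordered by $\sigma_a$), the $V'$-path (ordered by $\sigma_b$), or the $V''$-path (ordered by $\id$ on $C$) of $H$.

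Next I would check that the cross edges of $H$ are exactly the external variables of $\rAddPar(a,b)$, using $C=A\triangle B$. A variable $x_i\in A\cap B$ occurs in $a$ and $b$ but not in $C$, giving the edge $(i,i')$; a variable $x_i\in A\setminus B = A\cap(A\triangle B)$ occurs in $a$ and in the sum constraint, giving $(i,i'')$; and $x_i\in B\setminus A = B\cap(A\triangle B)$ gives $(i',i'')$. Hence $H$ is exactly the Tseitin graph of the refined formula in which each parity constraint has been split at both ends. The single flipped literal defining $C$ changes only the charge of one vertex and leaves the graph untouched.

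It then remains to recover the unrefined formula. Undoing each split means contracting the Tseitin edge that was introduced ($t_0$ at the first gadget, and its analogue at the last gadget), which merges the two endpoint vertices of the corresponding refined path back into a single gadget carrying two external variables. Concretely I would contract, in $H$, the first and last edges of each of the three paths: $(\sigma_a(1),\sigma_a(2))$ and $(\sigma_a(|A|-1),\sigma_a(|A|))$ on $V$; the two analogous edges on $V'$ ordered by $\sigma_b$; and the first and last edges of the $V''$-path ordered by $\id$. This is $2\times 3 = 6$ contractions, and after them the internal path edges correspond to the surviving Tseitin variables, the cross edges to the external variables, and each endpoint vertex to a two-external-variable gadget---exactly the Tseitin graph of $\rAddPar(a,b)$.

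The argument is essentially bookkeeping rather than a source of real difficulty: the only thing to watch is that the vertex-to-gadget correspondence is tracked consistently across all three paths and all three families of cross edges simultaneously, and that each path is long enough (each constraint has at least four variables) for its first and last gadgets to be distinct, so that the six contracted edges are genuinely distinct. Given the two-constraint case, no new idea is needed; the main step is simply verifying that the correspondence is exact.
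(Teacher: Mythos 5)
Your proposal is correct and follows exactly the route the paper intends: the paper gives no explicit proof of this lemma, but its Example (contracting $(4,5)$, $(6,2)$, $(1',5')$, $(2',3')$, $(3'',4'')$, $(4'',6'')$) and the proof sketch of the analogous four-edge lemma for $\rPar$ show that the intended argument is precisely contracting the first and last edges of each of the three paths. Your additional bookkeeping (identifying $H$ with the refined Tseitin graph, matching cross edges to external variables via $C = A\triangle B$, and noting the degenerate short-path caveat) is a faithful and slightly more careful elaboration of the same idea.
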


\noindent At this point it is worth giving an example of such a graph~$H$.
\begin{example}\label{example:3_layer_graph}
	Let $n = 6$, $A = \{x_1,x_2,x_4,x_5,x_6\}$ and $B = \{x_1,x_2,x_3,x_5\}$. So $A \triangle B = \{x_3,x_4,x_6\}$. The constraints encoded in CNF with $\parity$ are $x_1 \oplus x_2 \oplus x_4 \oplus x_5 \oplus x_6 = 0$, $x_1 \oplus x_2 \oplus x_3 \oplus x_5 = 0$ and $x_3 \oplus x_4 \oplus x_6 = 1$. Let  $\sigma_a(1) = 4$, $\sigma_a(2) = 5$, $\sigma_a(4) = 1$, $\sigma_a(5) = 6$, $\sigma_a(6) = 2$ and $\sigma_b(1) = 1$, $\sigma_b(2) = 5$, $\sigma_b(3) = 2$, $\sigma_b(5) = 3$. Then the graph $H$ is:
$$
H = \raisebox{-0.5\height}{
	\begin{tikzpicture}[xscale=0.9, yscale=0.8, every node/.style={scale=0.9}]
		\def\x{2};
		\def\y{0.7};
		\def\s{1.5};
				
		\node[circle,fill=black, inner sep=\s, label=left:{$4$}] (5) at (0,0) { };
		\node[circle,fill=black, inner sep=\s, label=left:{$5$}] (8) at (0,-1*\y) { };
		\node[circle,fill=black, inner sep=\s, label=left:{$1$}] (1) at (0,-2*\y) { };
		\node[circle,fill=black, inner sep=\s, label=left:{$6$}] (7) at (0,-3*\y) { };
		\node[circle,fill=black, inner sep=\s, label=left:{$2$}] (2) at (0,-4*\y) { };

		\node[circle,fill=black, inner sep=\s,label=right:{$3''$}] (4'') at (2*\x,-0.5*\y) { };
		\node[circle,fill=black, inner sep=\s, label=right:{$4''$}] (5'') at (2*\x,-1.5*\y) { };
		\node[circle,fill=black, inner sep=\s, label=right:{$6''$}] (7'') at (2*\x,-2.5*\y) { };
				
		\node[circle,fill=black, inner sep=\s, label=right:{$1'$}] (1') at (1*\x,-0.5*\y) { };
		\node[circle,fill=black, inner sep=\s, label=right:{$5'$}] (8') at (1*\x,-1.5*\y) { };
		\node[circle,fill=black, inner sep=\s, label=right:{$2'$}] (2') at (1*\x,-2.5*\y) { };
		\node[circle,fill=black, inner sep=\s, label=right:{$3'$}] (4') at (1*\x,-3.5*\y) { };
				
		\draw (1) -- (1');
		\draw (8) -- (8');
		\draw (2) -- (2');
		\draw (4') -- (4'');
		\draw (7) to [out=-30, in=-110] (7'');
		\draw (5) to [out=0, in=130] (5'');
				
		\draw (5) -- (8) -- (1) -- (7) -- (2);
		\draw (1') -- (8') -- (2') -- (4');
		\draw (4'') -- (5'') -- (7'');
	\end{tikzpicture}
}$$
The Tseitin graph of $\rAddPar(a,b)$ is the graph $H$ above after contraction of the edges $(4,5)$, $(6,2)$, $(1',5')$, $(2',3')$, $(3'',4'')$ and $(4'',6'')$.
\end{example}

\iftrue
Let $n' = |A \cup B|$. Since $A \cap B$,  $A\cap(A\triangle B)$ and $B\cap (A \triangle B)$ are pairwise disjoint and form a partition of $A \cup B$, we have that $\max(|A\cap B|,|A\cap(A\triangle B)|,|B\cap (A \triangle B)|) \geq \frac{n'}{3}$. We assume, without loss of generality, that $|A \cap B| \geq \frac{n'}{3}$. When both $A$ and $B$ are chosen uniformly at random from $X$ then, with high probability, we have that $n' \geq \frac{n}{2}$ and thus $|A \cap B| \geq \frac{n}{6}$. We now show that when $|A \cap B| > 0$ we can find a minor of $H$ that is isomorphic to a graph $G_\sigma$ with $\sigma \in \mathfrak{S}_{|A \cap B|}$. The minor, that we call $m(H)$, is obtained as follows:
\begin{itemize}
	\item[1.] remove all edges incident to a vertex of $V''$, then remove~$V''$ 
	\item[2.] contract every vertex $i \in V$ not connected to any $j' \in V'$ with one of its neighbors, repeat until $V$ contains only vertices connected to some vertex in $V'$
	\item[3.] contract every vertex $i' \in V'$ not connected to any $j \in V$ with one of its neighbors, repeat until $V'$ contains only vertices connected to some vertex in $V$
\end{itemize}
The construction is illustrated by Figure~\ref{fig:minor} for the graph $H$ of Example~\ref{example:3_layer_graph}. 
The construction gives the same minor regardless of the neighbors chosen for the contractions in steps 2 and~3. 

	\begin{figure}[h]
		\centering
		\begin{tikzpicture}[scale=0.8, xscale=0.6, every node/.style={scale=0.8}]
			\def\x{2};
			\def\y{0.7};
			\def\s{1.5};
			
			\node[circle,fill=black, inner sep=\s, label=left:{$4$}] (5) at (0,0) { };
			\node[circle,fill=black, inner sep=\s, label=left:{$5$}] (8) at (0,-1*\y) { };
			\node[circle,fill=black, inner sep=\s, label=left:{$1$}] (1) at (0,-2*\y) { };
			\node[circle,fill=black, inner sep=\s, label=left:{$6$}] (7) at (0,-3*\y) { };
			\node[circle,fill=black, inner sep=\s, label=left:{$2$}] (2) at (0,-4*\y) { };

			\node[circle,fill=black, inner sep=\s, label=right:{$1'$}] (1') at (1*\x,-0.5*\y) { };
			\node[circle,fill=black, inner sep=\s, label=right:{$5'$}] (8') at (1*\x,-1.5*\y) { };
			\node[circle,fill=black, inner sep=\s, label=right:{$2'$}] (2') at (1*\x,-2.5*\y) { };
			\node[circle,fill=black, inner sep=\s, label=right:{$3'$}] (4') at (1*\x,-3.5*\y) { };
			
			\draw (1) -- (1');
			\draw (8) -- (8');
			\draw (2) -- (2');
			
			\draw (5) -- (8) -- (1) -- (7) -- (2);
			\draw (1') -- (8') -- (2') -- (4');
			
			\node[align=center] (l) at (0.5*\x,-5*\y) {Step 1}; 
			
			node/.style={scale=0.8}]
			\def\x{2};
			\def\y{0.7};
			\def\s{1.5};
			
			\node[circle,fill=black, inner sep=\s, label=left:{$5$}] (8a) at (5,-1*\y) { };
			\node[circle,fill=black, inner sep=\s, label=left:{$1$}] (1a) at (5,-2*\y) { };
			\node[circle,fill=black, inner sep=\s, label=left:{$2$}] (2a) at (5,-3*\y) { };

			\node[circle,fill=black, inner sep=\s, label=right:{$1'$}] (1a') at (5+1*\x,-0.5*\y) { };
			\node[circle,fill=black, inner sep=\s, label=right:{$5'$}] (8a') at (5+1*\x,-1.5*\y) { };
			\node[circle,fill=black, inner sep=\s, label=right:{$2'$}] (2a') at (5+1*\x,-2.5*\y) { };
			\node[circle,fill=black, inner sep=\s, label=right:{$3'$}] (4a') at (5+1*\x,-3.5*\y) { };
			
			\draw (1a) -- (1a');
			\draw (8a) -- (8a');
			\draw (2a) -- (2a');
			
			\draw (8a) -- (1a) -- (2a);
			\draw (1a') -- (8a') -- (2a') -- (4a');
			
			\node[align=center] (la) at (5+0.5*\x,-5*\y) {Step 2};  
			
			node/.style={scale=0.8}]
			\def\x{2};
			\def\y{0.7};
			\def\s{1.5};
			
			\node[circle,fill=black, inner sep=\s, label=left:{$5$}] (8b) at (10,-1*\y) { };
			\node[circle,fill=black, inner sep=\s, label=left:{$1$}] (1b) at (10,-2*\y) { };
			\node[circle,fill=black, inner sep=\s, label=left:{$2$}] (2b) at (10,-3*\y) { };

			\node[circle,fill=black, inner sep=\s, label=right:{$1'$}] (1b') at (10+1*\x,-1*\y) { };
			\node[circle,fill=black, inner sep=\s, label=right:{$5'$}] (8b') at (10+1*\x,-2*\y) { };
			\node[circle,fill=black, inner sep=\s, label=right:{$2'$}] (2b') at (10+1*\x,-3*\y) { };
			
			\draw (1b) -- (1b');
			\draw (8b) -- (8b');
			\draw (2b) -- (2b');
			\draw (8b) -- (1b) -- (2b);
			\draw (1b') -- (8b') -- (2b') ;

			\node[align=center] (lb) at (10+0.5*\x,-5*\y) {Step 3};  
		\end{tikzpicture}
		\caption{Construction of $m(H)$}\label{fig:minor}
	\end{figure}
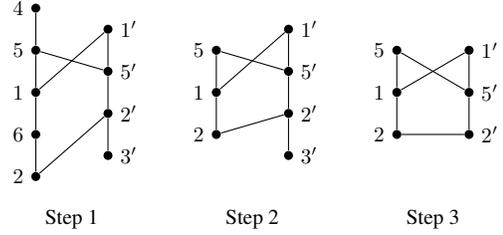

\begin{lemma}\label{lemma:obvious_lemma}
	When $A$, $B$, $\sigma_a$ and $\sigma_b$ are chosen independently and uniformly at random then, conditioned on $|A \cap B| = k$, the minor $m(H)$ follows the same distribution as $G_\sigma$ when $\sigma$ is sampled uniformly at random in $\mathfrak{S}_k$.
\end{lemma}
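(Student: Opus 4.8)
The plan is to first pin down the precise structure of $m(H)$, then exhibit a canonical relabelling under which $m(H)$ becomes literally some $G_\tau$, and finally show that this $\tau$ is uniform on $\mathfrak{S}_k$. After step~1 of the construction (deleting $V''$ together with its incident edges) the only edges joining $V$ to $V'$ are the matching edges $(i,i')$ with $x_i \in A \cap B$, while each of $V$ and $V'$ still carries a single path, following $\sigma_a$ and $\sigma_b$ respectively. Steps~2 and~3 contract away exactly the vertices of $V$ (resp.\ $V'$) carrying no matching edge, that is, those indexed by $A \setminus B$ (resp.\ $B \setminus A$). Since contracting a path vertex with one of its neighbours again yields a path, the surviving left vertices $\SB i \SM x_i \in A \cap B \SE$ form a path whose order is the order in which these indices appear in $\sigma_a$, and symmetrically the surviving right vertices form a $\sigma_b$-ordered path. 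Hence $m(H)$ consists of two length-$k$ paths joined by the perfect matching $i \leftrightarrow i'$.

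Next I would write $\rho_a$ (resp.\ $\rho_b$) for the bijection $[k] \to A \cap B$ recording the order of $A \cap B$ along the $\sigma_a$-path (resp.\ $\sigma_b$-path). Relabelling each left vertex by its position in the left path (so the $s$-th left vertex becomes $s$ and its matched partner becomes $s'$) turns the left path into $1 - 2 - \cdots - k$ and the matching into the edges $(i,i')$, exactly as in the definition of $G_\sigma$. Under this relabelling the right path reads $\tau(1)' - \tau(2)' - \cdots - \tau(k)'$, where $\tau = \rho_a^{-1} \circ \rho_b$, because the right vertex at $\sigma_b$-position $t$ carries element $\rho_b(t)$ and therefore receives label $(\rho_a^{-1}(\rho_b(t)))'$. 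Thus $m(H)$ is isomorphic, as a labelled graph, to $G_\tau$.

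It then remains to argue that $\tau$ is uniform on $\mathfrak{S}_k$. Conditioning on the actual sets $A$ and $B$ (which fixes $A \cap B$), the permutation $\sigma_a$ is still uniform over the orderings of $A$ and independent of $\sigma_b$; hence its restriction $\rho_a$ to the fixed subset $A \cap B$ is a uniform ordering of $A \cap B$, and likewise $\rho_b$ is uniform and independent of $\rho_a$. For each fixed $\rho_a$ the map $\rho_b \mapsto \rho_a^{-1} \circ \rho_b$ is a bijection onto $\mathfrak{S}_k$, so $\tau$ is uniform on $\mathfrak{S}_k$; averaging over all pairs $(A,B)$ with $|A \cap B| = k$ leaves this distribution unchanged. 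Since $G_\sigma$ ranges over the same family of graphs as $\sigma$ ranges over $\mathfrak{S}_k$, the relabelled $m(H)$ has the same distribution as $G_\sigma$ for a uniformly random $\sigma$, which is the claim.

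I would expect the only delicate point to be the bookkeeping of the two orders: keeping straight that one path is canonicalised to the identity while the other carries the composite permutation $\rho_a^{-1} \circ \rho_b$, and verifying that this composite permutation is exactly the one realised by $G_\tau$. The probabilistic content is routine, resting on two standard facts: the restriction of a uniform permutation to a fixed subset is a uniform ordering of that subset, and the product of an independent uniform permutation with any fixed permutation is again uniform.
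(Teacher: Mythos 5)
Your proposal is correct and follows essentially the same route as the paper's proof: fix $A$ and $B$, observe that $m(H)$ is two $\sigma_a$- and $\sigma_b$-ordered paths on $A\cap B$ joined by a perfect matching, identify the resulting permutation as a composite of the induced orderings of $A\cap B$, and conclude uniformity because composing with an independent uniform permutation (equivalently, counting the pairs $(\sigma_a,\sigma_b)$ realising each target) yields the uniform distribution on $\mathfrak{S}_k$. Your write-up is somewhat more explicit than the paper's about the structure of $m(H)$ and about why the restrictions $\rho_a,\rho_b$ are uniform and independent, but the underlying argument is the same.
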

\begin{proof}
	Consider the event ``$m(H)$ is isomorphic to $G_\sigma$'', which we just denote by $m(H) = G_\sigma$. Fix $A$ and $B$. For every $\sigma_a$ there is a unique  bijection $\rho_a : A \cap B \rightarrow [k]$ such that $\rho_a^{-1} \circ \sigma_a \circ \rho_a = \id \in \mathfrak{S}_k$ ($\rho_a$ is just a renaming of the vertices). Let $\sigma \in \mathfrak{S}_k$ be the unique permutation such that $m(H)$ is isomorphic to $G_\sigma$, then $\sigma = \rho_a \circ \sigma_b \circ \sigma_a^{-1} \circ \rho_a^{-1}$. For every $\sigma \in \mathfrak{S}_k$, there exists the same number of pairs $(\sigma_a,\sigma_b)$ such that $\sigma = \rho_a \circ \sigma_b \circ \sigma_a^{-1} \circ \rho_a^{-1}$, so when $A$ and $B$ are fixed, the probability that $m(H) = G_\sigma$ is the same for all $\sigma$. This probability is independent of $A$ and $B$ as long as $|A \cap B| = k$, so the result follows.
\end{proof}

\begin{lemma}\label{lemma:3layers_high_tw}
	There is a constant $\alpha > 0$ such that, when $A$, $B$, $\sigma_a$ and $\sigma_b$ are chosen independently and uniformly at random, the probability that $tw(H) < \alpha n$ vanishes to $0$ as $n$ goes to infinity.
\end{lemma}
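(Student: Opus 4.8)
The plan is to lower-bound $\tw(H)$ by the treewidth of its minor $m(H)$ and then assemble the two preceding lemmas with a short concentration argument. Since $m(H)$ is a minor of $H$ and treewidth is minor-monotone, $\tw(H) \ge \tw(m(H))$, so it suffices to show $\tw(m(H)) \ge \alpha' n$ with high probability for a suitable constant $\alpha' > 0$.

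First I would pin down a value $k$ that is linear in $n$ with high probability. Write $n' = |A \cup B|$ and let $k$ be the size of the largest of the three pairwise-disjoint sets $A \cap B$, $A \cap (A \triangle B)$, $B \cap (A \triangle B)$ partitioning $A \cup B$; then deterministically $k \ge n'/3$. Since each element of $X$ lies in $A$ and in $B$ independently with probability $\tfrac12$, it lies in $A \cup B$ with probability $\tfrac34$, so $\mathbb{E}[n'] = \tfrac34 n$, and a Chernoff bound gives $n' \ge \tfrac{n}{2}$, hence $k \ge \tfrac{n}{6}$, outside an event whose probability vanishes as $n \to \infty$. As noted in the text preceding the lemma, we may assume by symmetry that the maximizing set is $A \cap B$: the two layers realizing the maximum each carry a path ordered by one of $\sigma_a$, $\sigma_b$, $\id$, so restricting to those two layers again yields a minor of the form $G_\sigma$, and the analogue of Lemma~\ref{lemma:obvious_lemma} holds verbatim because $\sigma_a,\sigma_b$ are uniform and independent (and $\id$ is fixed).

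Next I would feed $k$ into the distributional identity. Conditioned on $|A \cap B| = k$, Lemma~\ref{lemma:obvious_lemma} says $m(H)$ is distributed exactly as $G_\sigma$ with $\sigma$ uniform in $\mathfrak{S}_k$. Let $\alpha$ be the constant from Lemma~\ref{lemma:expander_whp} and set $f(k) = \Pr_{\sigma \in \mathfrak{S}_k}(\tw(G_\sigma) < \alpha k)$, so that Lemma~\ref{lemma:expander_whp} asserts $f(k) \to 0$ as $k \to \infty$. Because $k \ge \tfrac{n}{6}$ forces $\alpha k \ge \alpha n / 6$, and $\tw(H) < \alpha k$ implies $\tw(m(H)) \le \tw(H) < \alpha k$, with $\alpha' := \alpha/6$ I obtain
\[
\Pr\!\left(\tw(H) < \alpha' n\right) \le \Pr\!\left(k < \tfrac{n}{6}\right) + \sum_{k \ge n/6} \Pr\bigl(|A \cap B| = k\bigr)\, f(k) \le \Pr\!\left(k < \tfrac{n}{6}\right) + \sup_{k \ge n/6} f(k).
\]
The first term vanishes by the concentration bound, and the second vanishes because $f(k) \to 0$ while $\tfrac{n}{6} \to \infty$ forces $\sup_{k \ge n/6} f(k) \to 0$. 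Hence $\Pr(\tw(H) < \alpha' n) \to 0$, which is the claim with the constant $\alpha'$.

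The genuinely delicate point is this final combination rather than any single ingredient: Lemma~\ref{lemma:expander_whp} is an asymptotic-in-$k$ statement, yet here $k$ itself grows with $n$, so I must know the treewidth bound holds \emph{uniformly} over all admissible $k$, which is precisely what passing to $\sup_{k \ge n/6} f(k)$ secures. A secondary point that must be stated explicitly is the justification of the ``without loss of generality'' reduction to $A \cap B$ being the largest intersection, namely that the symmetric minor constructions for the other two layer-pairs also produce a uniformly random $G_\sigma$; this follows from the uniformity and independence of $\sigma_a$ and $\sigma_b$.
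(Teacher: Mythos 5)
Your proposal is correct and follows essentially the same route as the paper's proof: reduce to the minor $m(H)$, invoke the distributional identity of Lemma~\ref{lemma:obvious_lemma} together with Lemma~\ref{lemma:expander_whp} conditioned on $|A \cap B| = k$, and combine with the high-probability bound $|A \cap B| \geq n/6$. Your explicit passage to $\sup_{k \ge n/6} f(k)$ and your justification of the ``without loss of generality'' step merely make precise what the paper leaves implicit.
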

\begin{proof}
	Suppose, without loss of generality, that we always have $\max(|A\cap B|,|A\cap(A\triangle B)|,|B\cap (A \triangle B)|) = |A \cap B|$. Fix $k \in \mathbb{N}$. By Lemmas~\ref{lemma:expander_whp} and~\ref{lemma:obvious_lemma}, there is a constant $\beta$ such that $\textup{Pr}(\tw(m(H)) < \beta k \mid |A \cap B| = k)$ tends to $0$ as $k$ increases. Since $m(H)$ is a minor of $H$, we have that $\tw(m(H)) \leq \tw(H)$, thus $\textup{Pr}(\tw(H) < \beta k \mid |A \cap B| = k)$ also tends to $0$ as $k$ increases. It follows that $\textup{Pr}(\tw(H) < \frac{\beta n}{6} \mid |A \cap B| \geq \frac{n}{6})$ vanishes to $0$ as $n$ increases, and since $\textup{Pr}(|A \cap B| \geq \frac{n}{6})$ tends to $1$ as $n$ increases, we finally obtain that $\textup{Pr}(\tw(H) < \frac{\beta n}{6})$ goes to $0$ as $n$ increases.
\end{proof}

The graphs $H$ have degree at most $3$ and by Lemma~\ref{lemma:3layers_high_tw} they have linear treewidth with high probability, so using Theorem~\ref{theorem:linear_tw_res} we immediately have that, with high probability, Tseitin formulas over $H$ are hard for resolution refutation. Given the relation between $H$ and $\rAddPar(a,b)$'s Tseitin graphs described in Lemma~\ref{lemma:tw_after_contraction_2}, we conclude that almost all formulas $\rAddPar(a,b)$ are hard for resolution.
\fi

\begin{theorem}\label{theorem:hard_formulas_3layers}
	There is a constant $\alpha > 0$ such that, with probability tending to $1$ as $n$ increases, when $A$, $B$, $\sigma_a$ and $\sigma_b$ are chosen independently and uniformly at random, the length of a shortest resolution refutation of $\rAddPar(a,b)$ is least~$2^{\alpha n}$.  
\end{theorem}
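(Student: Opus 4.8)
The plan is to assemble the three ingredients established above: the high-probability treewidth bound for $H$ (Lemma~\ref{lemma:3layers_high_tw}), the relation between $H$ and the actual Tseitin graph of $\rAddPar(a,b)$ (Lemma~\ref{lemma:tw_after_contraction_2}), and the treewidth-based resolution lower bound (Theorem~\ref{theorem:linear_tw_res}). All that remains is to transfer the linear-treewidth property from $H$ to the Tseitin graph and to verify that this graph meets the hypotheses of Theorem~\ref{theorem:linear_tw_res}.

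First I would invoke Lemma~\ref{lemma:3layers_high_tw}: there is a constant $\beta > 0$ such that, with probability tending to $1$, $\tw(H) \geq \beta n$. Next, let $G'$ denote the Tseitin graph of $\rAddPar(a,b)$. By Lemma~\ref{lemma:tw_after_contraction_2}, $G'$ is obtained from $H$ by contracting six edges, and since a single edge contraction lowers treewidth by at most one, $\tw(G') \geq \tw(H) - 6 \geq \beta n - 6$, which is still $\Omega(n)$ with high probability. It then remains to confirm the hypotheses of Theorem~\ref{theorem:linear_tw_res}. The number of vertices of $G'$ is $\Theta(n)$, being the number of $\xor$ constraints in $\rAddPar(a,b)$, so $\tw(G') = \Omega(n)$ is linear in the vertex count. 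Moreover every vertex of $G'$ corresponds to an $\xor$ constraint of at most three literals, so $G'$ has maximum degree bounded by a constant. Finally, $\rAddPar(a,b)$ is an unsatisfiable Tseitin formula by the Fact stated earlier, so Theorem~\ref{theorem:linear_tw_res} applies and yields a shortest resolution refutation of length $2^{\Omega(n)}$; absorbing the constants produces the claimed $\alpha$.

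The genuine work has already been carried out in the earlier lemmas---in particular in passing from the random instance to the minor $m(H)$ and in importing the $4$-regular treewidth estimates of Kim--Wormald, Chandran--Subramanian, and Friedman---so this concluding step is essentially bookkeeping. The point that warrants the most care is the management of the high-probability statements: Lemma~\ref{lemma:3layers_high_tw} already conditions on the likely event $|A \cap B| \geq n/6$ and folds in the fact that $\textup{Pr}(|A \cap B| \geq n/6) \to 1$, so the constant $\beta$ (and hence $\alpha$) it produces is uniform, and neither the subtraction of the six contracted edges nor the translation from an $\Omega(n)$ treewidth into a $2^{\Omega(n)}$ length bound disturbs the convergence of the probability to $1$.
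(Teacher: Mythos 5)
Your proposal is correct and follows exactly the paper's own argument: combine Lemma~\ref{lemma:3layers_high_tw} (linear treewidth of $H$ w.h.p.) with Lemma~\ref{lemma:tw_after_contraction_2} (the Tseitin graph is $H$ after six edge contractions, so its treewidth drops by at most six), check the constant degree bound, and conclude via Theorem~\ref{theorem:linear_tw_res}. The only cosmetic difference is that you spell out the bookkeeping (vertex count, probability management) slightly more explicitly than the paper does.
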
 
\iftrue
\begin{proof}
$\rAddPar(a,b)$ is Tseitin graph whose graph $G$ has maximum degree at most $4$. By Lemma~\ref{lemma:rAddPar_and_H} we have $\tw(G) \geq \tw(H) - 6$ so, by Lemma~\ref{lemma:3layers_high_tw}, $\tw(G) = \Omega(n)$ with high probability when $A$, $B$, $\sigma_a$ and $\sigma_b$ are chosen independently and uniformly. The statement of the theorem then follows from Theorem~\ref{theorem:linear_tw_res}.
\end{proof}
\fi

\label{sec:fav}
Notice here that $\sigma_a$, $\sigma_b$ and $\id$ are relative to each other shuffled randomly. This is the most chaotic scenario which is likely to contribute to its difficulty. Let us instead briefly discuss an example that favors shorter proofs. We are given $a$ and $b$ in their random orders but, in an addition step, we may be the ones creating the encoding for the sum constraint and so we can choose the permutation $\sigma_c$ favorably by ensuring that $\sigma_c(i)< \sigma_c(j)$ if and only if either:
\begin{itemize}
	\item $x_i\in A \setminus B$ and $x_j \in B \setminus A $
	\item $x_i, x_j \in A \setminus B$ and  $\sigma_a(i)< \sigma_a(j)$
	\item $x_i, x_j \in B \setminus A$ and  $\sigma_b(i)< \sigma_b(j)$
\end{itemize}
\noindent 
Even in this case, the encoding $\rAddPar(a,b,\sigma_c) = a \land b \land \parity(C,U,\sigma_c)$ will be hard w.h.p. when $|A \cap B| = \Omega(n)$ since then we can find the minor $G_\sigma$ evoked above by only looking in $H[V \cup V']$. The condition $|A \cap B| = \Omega(n)$ is fulfilled almost surely when $A$ and $B$ are chosen uniformly. 

\section{Sorting and Upper Bounds}

\citet{PR16} showed that XOR-reasoning such as in Gaussian elimination
can have short proofs; a BDD approach can find polynomial-size
extended resolution proofs \cite{BDDER}. For these particular formulas
we can do even better, reducing the complexity and the number of extra
variables needed.

\begin{lemma}[\citealt{ChewH20}]\label{switching}
	Suppose we have a CNF $F$ and two sets of XOR clauses $\xor(x,y,p)$
	and $\xor(p,z,q)$, where variable $p$ appears nowhere in $F$. We can infer:
	\begin{prooftree}
		\AxiomC{$F \wedge\xor(x,y,p)\wedge\xor(p,z,q)$}
		\UnaryInfC{$F \wedge\xor(y,z,p)\wedge\xor(p,x,q)$}
	\end{prooftree}
	in 32 of DRAT steps without adding new variables.
	
\end{lemma}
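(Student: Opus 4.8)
The plan is to realize the switch as four consecutive blocks of eight DRAT steps, exploiting that $p$ occurs only in the two gates $\xor(x,y,p)$ and $\xor(p,z,q)$ and nowhere in $F$. The guiding observation is that, once $p$ is eliminated, both the source pair $\xor(x,y,p)\wedge\xor(p,z,q)$ and the target pair $\xor(y,z,p)\wedge\xor(p,x,q)$ encode the \emph{same} $p$-free constraint $x\oplus y\oplus z\oplus q=0$, whose canonical CNF I denote $\xor(x,y,z,q)$. The switch only reroutes the auxiliary variable from $p=x\oplus y$ to $p=y\oplus z$, while keeping $q=x\oplus y\oplus z$ fixed. So I would pass through this shared projection: add it, delete the old definition of $p$, install the new definition of $p$, and delete the projection again.

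First I would add the eight clauses of $\xor(x,y,z,q)$ by ATA: each is the resolvent on $p$ of one clause of $\xor(x,y,p)$ and one clause of $\xor(p,z,q)$, and a resolvent of two present clauses is always an asymmetric tautology, so these additions are sound and model-preserving. Second, with the $p$-free projection now in place, I would delete the eight old gate clauses by RATE, each on its $p$- or $\bar p$-literal. The required unit-propagation refutations come from two sources: for the gate clauses that share $x$ or $y$ the conflict is an immediate clash of complementary units, while for the clauses coupling $p$ with $z,q$ the conflict is triggered by a clause of the freshly added $\xor(x,y,z,q)$. Since these conflicts never rely on other old gate clauses, the eight deletions are valid in any order, after which $p$ occurs nowhere.

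Third, $p$ is now genuinely fresh, so I would reintroduce it with its new meaning: the four clauses of $\xor(y,z,p)$ are blocked on their $p$-literal (initially there is no clause with the opposite literal), hence RAT, and are added by RATA; then the four clauses of $\xor(p,x,q)$ are added, again by RATA on their $p$-literal, the propagation to $\bot$ being supplied jointly by $\xor(y,z,p)$ (fixing $p=y\oplus z$) and the projection (fixing $q=x\oplus y\oplus z$). Fourth, now that the new gates are present, each clause of $\xor(x,y,z,q)$ is again a resolvent on $p$ of the new gate clauses, hence an asymmetric tautology, so I would remove all eight projection clauses by ATE. The result is exactly $F\wedge\xor(y,z,p)\wedge\xor(p,x,q)$, with step count $8+8+8+8=32$, and no new variable is ever introduced, so the derivation lives in DRAT$^-$.

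The step I expect to be the main obstacle is the reintroduction of $p$ in the third block. The tempting shortcut—deriving the new gates as plain asymmetric tautologies from the projection—fails because unit propagation is too weak: from $\xor(y,z,p)$ together with $\xor(x,y,z,q)$ one can propagate the two contradictory \emph{binary} consequences $y\oplus z=1$ and $y\oplus z=0$ over $y,z$, but, absent a unit, propagation stalls before the empty clause. The fix is to add $\xor(p,x,q)$ by the strictly stronger RATA rule rather than ATA, and one must then check, literal by literal, that for every partner clause the combined unit assignment does falsify some projection clause. Verifying these RAT side conditions—and confirming that the RATE deletions of the second block do not covertly depend on clauses already removed—is the only genuinely delicate part; everything else is the bookkeeping of four symmetric octets of clauses.
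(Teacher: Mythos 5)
Your proposal is correct and follows essentially the same route as the paper's proof: the paper's Figure realizes the switch as exactly the same four blocks of eight steps (ATA the eight resolvents on $p$ forming the $p$-free projection, RATE the old gate clauses on their $p$-literals, RATA the new gate clauses, ATE the projection), totalling $32$ steps with no new variables. Your additional care about the order within the third block and the RAT side conditions matches what the paper's derivation implicitly relies on.
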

\iftrue
\begin{proof}
	We can do this in 32  steps as in Figure~\ref{fig:switch}.
	\begin{figure}[h!]
		\centering
		\text{ }\hfill 
		\begin{minipage}{0.10\textwidth}
			\centering
			ATA
			
			\smallskip
			
			\framebox{
				\parbox{0.95\textwidth}{
					$$
						\bar q\vee x \vee y \vee z $$ $$\bar q\vee \bar x \vee \bar y \vee z$$ 
						$$\bar q\vee x \vee \bar y \vee \bar z $$ $$ \bar q\vee \bar x \vee y \vee \bar z$$ $$
						q\vee \bar x \vee y \vee z $$ $$  q\vee x \vee  \bar y \vee  z$$ $$
						q\vee x \vee  y \vee \bar z $$ $$  q\vee \bar x \vee \bar y \vee \bar z$$
					
			}}
		\end{minipage}\hfill
		\begin{minipage}{0.12\textwidth}
			\centering
			RATE
			
			\smallskip
			
			\framebox{
				\parbox{0.95\textwidth}{
					$$
						\mathtt{d} \quad \bar p\vee x \vee y  $$ $$
						\mathtt{d} \quad \bar p\vee \bar x \vee \bar y  $$ $$
						\mathtt{d} \quad p\vee x \vee \bar y  $$ $$
						\mathtt{d} \quad p\vee \bar x \vee y  $$ $$
						\mathtt{d} \quad  \bar p\vee q \vee z  $$ $$
						\mathtt{d} \quad  \bar p\vee \bar q \vee \bar z $$ $$
						\mathtt{d} \quad    p\vee  q \vee \bar z$$ $$
						\mathtt{d} \quad   p\vee  \bar q \vee  z $$
			}}
		\end{minipage}\hfill
		\begin{minipage}{0.10\textwidth}
			\centering
			RATA
			
			\smallskip
			
			\framebox{\parbox{0.95\textwidth}{
				
						$$\bar p\vee z \vee y  $$
						$$\bar p\vee \bar z \vee \bar y  $$
						$$p\vee z \vee \bar y  $$
						$$p\vee \bar z \vee y  $$ 
						$$\bar p\vee q \vee x  $$
						$$\bar p\vee \bar q \vee \bar x $$
						$$p\vee  q \vee \bar x$$
						$$p\vee  \bar q \vee  x $$
			}}
		\end{minipage}\hfill \text{ }
	
		\begin{minipage}{0.25\textwidth}
			\centering
			\smallskip 
			
			ATE		
			\smallskip 	
			
			\framebox{\parbox{1\textwidth}{
			$$
						\mathtt{d} \quad \bar q\vee x \vee y \vee z \\ \quad\mathtt{d} \quad  \bar q\vee \bar x \vee \bar y \vee z $$
						$$\mathtt{d} \quad \bar q\vee x \vee \bar y \vee \bar z \\ \quad\mathtt{d} \quad  \bar q\vee \bar x \vee b \vee \bar z $$
						$$\mathtt{d} \quad q\vee \bar x \vee y \vee z \\ \quad\mathtt{d} \quad   q\vee x \vee  \bar y \vee  z$$
						$$\mathtt{d} \quad q\vee x \vee  y \vee \bar z \\ \quad\mathtt{d} \quad   q\vee \bar x \vee \bar y \vee \bar z $$
			}}
		\end{minipage}
		\caption{DRAT steps required for Lemma~\ref{switching}, \texttt{d} denotes a deletion step. \label{fig:switch}}
	\end{figure}
\end{proof}

\fi
\iftrue
\begin{lemma}[\citealt{ChewH20}]\label{switchingER}
	Suppose we have a CNF $F$ and two sets of XOR clauses $\xor(x,y,p)$ and $\xor(p,z,q)$, where variable $p$ appears nowhere in $F$. We can start with $F \wedge\xor(x,y,p)\wedge\xor(p,z,q)$ and by introducing a new variable $p'$ infer all clauses from $\xor(y,z,p')$ and $\xor(p',x,q)$ in a constant number of Extended Resolution steps.
\end{lemma}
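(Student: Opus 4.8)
The plan is to treat this as a genuine \emph{re-association} of the XOR chain rather than an in-place rewrite. The two input blocks assert $p = x\oplus y$ and $q = p\oplus z$, hence $q = x\oplus y\oplus z$, whereas the two target blocks assert $p' = y\oplus z$ and $q = p'\oplus x$, giving the \emph{same} value $q = x\oplus y\oplus z$. Unlike Lemma~\ref{switching}, here we are permitted a fresh variable, so I would not reuse $p$ at all but instead introduce $p'$ by the extension rule with the definition $p' = y\oplus z$. Concretely, add the four extension clauses $\xor(y,z,p')$ exactly as in the XOR-defining extension described earlier in the preliminaries; since $y$ and $z$ already occur, this is a legal application of the extension rule, and it immediately yields all clauses of the first target block $\xor(y,z,p')$ for free.

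It then remains to derive the four clauses of $\xor(p',x,q)$. The key observation is that, after the extension step, the formula contains the three constraint blocks $\xor(x,y,p)$, $\xor(p,z,q)$ and $\xor(y,z,p')$, whose XOR-sum cancels the shared variables $y,z,p$ and leaves exactly $x\oplus q\oplus p' = 0$. Consequently each of the four clauses of $\xor(p',x,q)$ is entailed by these three blocks, and in fact is a \emph{prime} implicate of their conjunction: dropping any one of its literals yields a clause that some model of the three blocks falsifies.

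Now I would invoke implicational completeness of resolution: every clause entailed by a CNF is derivable by resolution up to subsumption, and for a prime implicate no proper subclause is entailed, so each target clause is derivable \emph{exactly}. What makes the step count constant is that all clauses in sight range over the fixed six-variable set $\{x,y,z,p,q,p'\}$; there are only boundedly many clauses over these variables, so each of the four derivations has length bounded by an absolute constant and the whole proof uses a constant number of steps. (If one prefers an explicit derivation to the completeness argument, eliminate $p$ from $\xor(x,y,p)\wedge\xor(p,z,q)$ by resolution to obtain the four-literal block $\xor(x,y,z,q)$, and then resolve this against $\xor(y,z,p')$ on $y$ and $z$ to obtain $\xor(p',x,q)$; both eliminations are over a constant number of variables and hence of constant size.)

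The only real subtlety to discharge is the claim that each target clause is produced \emph{as such}, rather than merely as a stronger subclause; this is exactly what the prime-implicate check above guarantees, since for a prime implicate the subsuming clause delivered by completeness must be the clause itself. Everything else is bookkeeping over a fixed finite variable set, so no asymptotic estimate is needed and the constant bound follows immediately.
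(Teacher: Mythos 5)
Your proposal is correct and matches the paper's proof in essence: the paper likewise introduces $\xor(y,z,p')$ as extension clauses on the fresh variable $p'$, derives the clauses of $\xor(x,y,z,q)$ by resolving out $p$ from $\xor(x,y,p)\wedge\xor(p,z,q)$, and then obtains $\xor(p',x,q)$ by resolution against $\xor(y,z,p')$ --- exactly the explicit derivation in your parenthetical. The prime-implicate/completeness wrapper is a valid but unnecessary detour; the direct constant-size derivation already suffices.
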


\begin{proof}
	We emulate the DRAT proof except we remove all deletion steps, derive the ATA clauses by resolution, and the $\xor(y,z,p')$ clauses can be introduced as extension clauses on $p'$. Then the $\xor(p',x,q)$ clauses can be derived by resolution with the ATA clauses. 
\end{proof}
\fi

\begin{lemma}[\citealt{ChewH20}] \label{reordering}
	Given two permutations $\sigma_1$ and  $\sigma_2$, $X$, $S$ and $T$ are disjoint sets of variables where both $S$ variables and $T$ variables do not appear in CNF $F$, $F\wedge\parity(X,S,\sigma_1)$ can be transformed into  $F\wedge\parity(X,T,\sigma_2)$ in $O(n\log n)$ many DRAT steps where $|X|=n$.
\end{lemma}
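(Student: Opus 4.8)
The plan is to treat Lemma~\ref{switching} as an atomic ``swap'' operation on the parity chain and to drive the variable sequence from the order $\sigma_1$ to the order $\sigma_2$ with a divide-and-conquer schedule that keeps the total number of swaps in $O(n\log n)$. First I would determine precisely what one application of Lemma~\ref{switching} does to $\parity(X,S,\sigma_1)$. Applied to a consecutive pair $\xor(t_{j-1},x_a,t_j)$ and $\xor(t_j,x_b,t_{j+1})$ with the Tseitin variable $t_j$ in the role of the pivot $p$---legal because $t_j$ occurs in no other clause and in particular not in $F$---the rule rewrites the pair to $\xor(t_{j-1},x_b,t_j)$ and $\xor(t_j,x_a,t_{j+1})$, taking $x=x_a$, $y=t_{j-1}$, $z=x_b$, $q=t_{j+1}$. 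Thus one step transposes the two neighbouring leaves $x_a,x_b$, leaves every other clause untouched, preserves the value stored in every Tseitin variable except $t_j$, and costs the fixed $32$ DRAT steps of Lemma~\ref{switching} with no net new variable. The head clause $\xor(x_{\sigma(1)},x_{\sigma(2)},t_1)$ and the tail clause $\xor(t_{n-3},x_{\sigma(n-1)},x_{\sigma(n)})$ each carry two leaves, so I would handle the transpositions involving these doubled boundary positions separately, checking that the same rule still applies.

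Since each basic operation is an adjacent transposition, sorting $\sigma_1$ into $\sigma_2$ by bubbling would cost a number of swaps equal to the number of inversions of $\pi=\sigma_2\circ\sigma_1^{-1}$, which is $\Theta(n^2)$ in the worst case. To reach $O(n\log n)$ I would instead reorder recursively in merge-sort fashion: cut the chain at a Tseitin variable into two halves, recursively bring the leaves of each half into their relative $\sigma_2$-order, and then merge the two sorted halves into one sorted chain. Merging is the only place where I spend fresh Tseitin variables, which is precisely why the target encoding is allowed a new set $T$; rather than interleaving by repeated crossings (which would reintroduce a quadratic cost), the aim is to splice each element of one half into its slot among the other using a constant number of switching-rule applications assisted by one fresh pivot, so that merging two halves of combined length $m$ costs $O(m)$ DRAT steps. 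The recurrence $T(n)=2T(n/2)+O(n)=O(n\log n)$ then yields the bound, and a final $O(n)$ pass renames the working Tseitin variables to $T$ and folds the boundary positions back into the two-leaf head and tail clauses to land exactly on $\parity(X,T,\sigma_2)$.

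The hard part will be the merge step. Interleaving with adjacent transpositions alone cannot meet the budget---reversing a single block already forces $\Omega(m^2)$ swaps---so the delicate claim is that, with one fresh Tseitin pivot per insertion, an element can be moved into its correct position in the other half in $O(1)$ applications of Lemma~\ref{switching}, and that every intermediate formula stays a syntactically valid conjunction of $\xor$ triples meeting the side condition of Lemma~\ref{switching} (the pivot occurring in no other clause and not in $F$). Establishing this, and confirming that the constant cost per spliced element does not grow with the recursion depth, is where the real argument lies; the $O(n\log n)$ bookkeeping and the closing normalization are then routine.
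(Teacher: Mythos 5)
Your first paragraph is fine, but it locks you into reading Lemma~\ref{switching} purely as an adjacent transposition of leaves (taking $x=x_a$, $y=t_{j-1}$), and that choice is what makes your merge step impossible to complete. The gap you yourself flag --- that one element of one half can be spliced into its slot in the other half in $O(1)$ applications of Lemma~\ref{switching} ``assisted by one fresh pivot'' --- is not a routine detail but the entire content of the lemma, and no mechanism is offered for it. Each application of the switching rule touches only two clauses sharing a pivot, i.e.\ two adjacent nodes of the underlying XOR tree; as long as the formula stays a linear chain, moving a leaf past $k$ other leaves changes $k$ adjacencies and cannot be done in $O(1)$ local steps, and a fresh Tseitin variable does not help unless you also say how its defining clauses are introduced and how they shortcut the chain (which would also add new variables, endangering the DRAT$^-$ / no-new-variables claims the paper needs downstream in Theorems~\ref{thm:DRATrpar} and~\ref{thm:threedrat}).

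The missing idea is that Lemma~\ref{switching} is strictly more than a leaf transposition: choosing $x=t_{j-1}$ and $y=x_a$ turns $\xor(t_{j-1},x_a,t_j)\wedge\xor(t_j,x_b,t_{j+1})$ into $\xor(x_a,x_b,t_j)\wedge\xor(t_j,t_{j-1},t_{j+1})$, i.e.\ a tree \emph{rotation} on the same variables. The paper's proof exploits exactly this: first rebalance the linear chain into a balanced binary tree in $O(n\log n)$ rotations; in a balanced tree any two leaves are at depth $O(\log n)$, so a single leaf--leaf swap costs $O(\log n)$ rotations, and $n-1$ such swaps realize any permutation; finally flatten back to a chain in $O(n\log n)$ rotations. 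If you want to keep your merge-sort framing, you would still have to let the intermediate formulas leave the linear-chain shape (effectively reinventing the balanced tree), so as written the proposal does not go through.
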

\begin{proof}[Sketch Proof]
	\begin{enumerate}
		\item In $O(n \log n)$ applications of Lemma~\ref{switching} we can take the linear structure of the Tseitin variables and reorganize it into a balanced binary tree, using a divide-and-conquer approach. 
		\item In $O(n \log n)$ applications of Lemma~\ref{switching} we can make any permutation of the leaf edges. 
		By first swapping any two leaves takes $O(\log n)$ applications of Lemma~\ref{switching}
		and then $n-1$ swaps are required (in the worst case) and sufficient to place every variable in place. 
		\item In $O(n \log n)$ applications of
                  Lemma~\ref{switching} we can take our balanced
                  binary tree and return it to a linear structure. \qedhere
	\end{enumerate}
\end{proof}
\iftrue
\begin{theorem}[\citealt{ChewH20}]\label{thm:DRATrpar}
	For any permutation $\sigma \in \mathfrak{S}_n$, there are DRAT$^-$ refutations of $\rPar(n, \sigma)$ that have $O(n\log n)$ many lines with no new variables.
\end{theorem}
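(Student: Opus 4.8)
The plan is to reduce $\rPar(n,\sigma)$ to the easy case $\rPar(n,\id)$ (a Dubois-like formula) by reordering one of the two parity chains into the order of the other, and then to give a linear-size resolution refutation of the aligned formula. Since every resolution step is in particular an ATA step, the entire derivation is a DRAT$^-$ proof, and the only thing to watch is that no fresh variables are introduced.

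First I would take the first conjunct $\parity(X,S,\id)$ as a fixed context $F$ and apply Lemma~\ref{reordering} to the second conjunct, transforming $\parity(X',T,\sigma)$ into $\parity(X',T,\id)$ in $O(n\log n)$ DRAT$^-$ steps. This invocation is legitimate: the Tseitin variables $T$ of the reordered chain are disjoint from $X$ and $S$ and hence do not occur in $F$, and the switching operations of Lemma~\ref{switching} underlying the reordering act \emph{in place} on the variables $T$ (redefining them rather than adding new ones). After this step the formula is $\parity(X,S,\id)\wedge\parity(X',T,\id)$, and since DRAT steps preserve unsatisfiability it is still a contradiction.

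Second, I would observe that the two chains now compute identical prefix parities: both $s_1$ and $t_1$ are defined as $x_1\oplus x_2$, and $s_{j}$, $t_{j}$ are obtained from $s_{j-1}$, $t_{j-1}$ by adding the same variable $x_{j+1}$. I would therefore derive the equivalences $s_j\leftrightarrow t_j$ (the clause pair $(\bar s_j\vee t_j)$ and $(s_j\vee\bar t_j)$) by induction on $j$. The base case follows by resolving the clauses of $\xor(x_1,x_2,s_1)$ against those of $\xor(x_1,x_2,t_1)$; the inductive step derives $s_j\leftrightarrow t_j$ from $s_{j-1}\leftrightarrow t_{j-1}$ together with $\xor(s_{j-1},x_{j+1},s_j)$ and $\xor(t_{j-1},x_{j+1},t_j)$. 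Each derivation takes a constant number of resolution steps, so this phase contributes $O(n)$ lines. Finally, combining $s_{n-3}\leftrightarrow t_{n-3}$ with the two terminal triples $\xor(s_{n-3},x_{n-1},x_n)$ and $\xor(t_{n-3},x_{n-1},\bar x_n)$, which assert opposite total parities, yields the empty clause in a constant number of resolution steps.

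The cost is dominated by the reordering phase, so the whole refutation has $O(n\log n)$ lines and uses no new variables, as required. The main obstacle is the first step: one must verify carefully that Lemma~\ref{reordering} applies with the first chain held fixed as context, and, crucially, that the reordering is carried out in place on the Tseitin set $T$ so that DRAT$^-$'s prohibition on new variables is respected. The remaining steps are the standard, well-understood linear refutation of two oppositely-charged parity constraints written in the same variable order.
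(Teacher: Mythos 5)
Your proposal is correct and follows essentially the same route as the paper's own (two-line) proof sketch: first reorder one chain to match the other via Lemma~\ref{reordering} in $O(n\log n)$ DRAT$^-$ steps, then finish with a linear resolution induction establishing $s_j\leftrightarrow t_j$ and deriving the empty clause from the oppositely-charged terminal triples. You merely spell out the induction and the in-place reuse of the Tseitin variables in more detail than the paper does.
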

\fi
\iftrue
\begin{proof}[Sketch Proof]
	\begin{enumerate}
		\item Using Lemma~\ref{reordering}, we can rearrange the two parity constraints to be in the same ordering. 
		\item With this ordering, the treewidth of the Tseitin graph shrinks, and a linear induction proof showing the equivalence of the Tseitin variables in resolution can be performed. 
	\end{enumerate}
\end{proof}
\fi

\iftrue
\begin{theorem}[\cite{ChewH20}]
	For any permutation $\sigma \in \mathfrak{S}_n$, there are Extended Resolution refutations of $\rPar(n, \sigma)$ that have $O(n\log n)$ many lines with no new variables.
\end{theorem}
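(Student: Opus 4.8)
The plan is to mirror, line for line, the $O(n\log n)$-line DRAT$^-$ refutation of Theorem~\ref{thm:DRATrpar}, translating each step into an Extended Resolution derivation. That refutation has two phases: first, Lemma~\ref{reordering} rewrites $\parity(X',T,\sigma)$ into an encoding in the identity order, so that both parity constraints are laid out along the same Tseitin chain; second, a short linear induction matches the two chains' partial-sum variables and resolves to the empty clause. I would keep this skeleton and change only how each individual step is justified, paying close attention to which steps would introduce variables.

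The second phase translates verbatim and needs no new variables. After the reordering, the two chains carry variables $s_i$ and $t_i$ that both stand for $x_1\oplus\dots\oplus x_{i+1}$. By induction on $i$, the two-clause equivalence $s_i\leftrightarrow t_i$ follows from $s_{i-1}\leftrightarrow t_{i-1}$ together with $\xor(s_{i-1},x_{i+1},s_i)$ and $\xor(t_{i-1},x_{i+1},t_i)$ in a constant number of resolutions, giving $O(n)$ lines; the final equivalence, resolved against the two now-aligned last constraints (one asserting total parity $0$, the other $1$), yields the empty clause in $O(1)$ steps. All of this is pure resolution on the variables already present in $\rPar(n,\sigma)$.

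The reordering phase is the crux, and it is exactly here that the ``no new variables'' requirement bites. In DRAT$^-$ each of the $O(n\log n)$ switches is Lemma~\ref{switching}, which reuses its pivot $p$ \emph{in place}: it deletes the old clauses defining $p$ (RATE) and adds clauses redefining $p=y\oplus z$ (RATA), neither of which preserves models. Extended Resolution is model-preserving and has no deletion rule, so its only mechanism for re-expressing a partial sum is the extension rule, i.e.\ introducing a fresh pivot $p'$ exactly as in Lemma~\ref{switchingER}. One cannot instead reuse $p$: the clauses of $\xor(y,z,p)$ are not entailed by the current formula (they would force $x=z$), so resolution cannot derive them, and Extended Resolution cannot delete the old clauses $\xor(x,y,p)$ that pin $p$ to $x\oplus y$. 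The faithful translation therefore introduces one fresh variable per switch, for $O(n\log n)$ extension variables in total, which still yields $O(n\log n)$ lines and matches the Extended Resolution bound already obtained for $\rAddPar$.

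I expect the main obstacle to be intrinsic rather than technical: beating the resolution lower bound of Theorem~\ref{theorem:hard_formulas}, which holds with high probability for a random $\sigma$, forces any $O(n\log n)$-line Extended Resolution refutation to invoke the extension rule, and the extension rule introduces variables. Since Extended Resolution without the extension rule is precisely resolution, a literal ``no new variables'' reading would make the stated bound fail for the many hard $\sigma$ of Theorem~\ref{theorem:hard_formulas}. The clean route is thus to establish the theorem with the $O(n\log n)$ extension variables produced by Lemma~\ref{switchingER}; the ``no new variables'' phrasing is warranted only for the DRAT$^-$ version in Theorem~\ref{thm:DRATrpar}, where RATE and RATA let the pivot be reused in place.
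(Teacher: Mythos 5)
Your proposal is correct and is essentially the paper's own proof: the paper likewise derives this theorem as a corollary of Theorem~\ref{thm:DRATrpar}, replacing each application of Lemma~\ref{switching} in the reordering phase by Lemma~\ref{switchingER} (one fresh extension variable per switch) and observing that the remaining linear-induction part is pure resolution and hence unaffected. You are also right to flag the ``no new variables'' clause: the paper's own construction introduces $O(n\log n)$ extension variables, and a literal reading would collapse Extended Resolution to resolution and contradict the lower bound of Theorem~\ref{theorem:hard_formulas} for random $\sigma$, so that phrase is evidently carried over in error from the DRAT$^-$ statement of Theorem~\ref{thm:DRATrpar}.
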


\begin{proof}
	This is a corollary of Theorem~\ref{thm:DRATrpar}. Lemma~\ref{reordering} can be done entirely in Extended Resolution using Lemma~\ref{switchingER}. The remainder of the proof is a resolution proof so is unaffected by that change from DRAT to Extended Resolution. 
\end{proof}
\fi

\begin{theorem}\label{thm:threedrat}
	For any parity constraints $a,b $ over $n$ input variables $X$, there are DRAT$^-$ refutations of $\rAddPar(a,b)$ that have $O(n\log n)$ many lines.
\end{theorem}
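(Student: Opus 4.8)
The plan is to follow the two-phase strategy behind Theorem~\ref{thm:DRATrpar}: a reordering phase, realized with Lemma~\ref{reordering}, followed by a purely linear resolution phase. Write $\rAddPar(a,b) = a \land b \land \parity(C,U,\id)$ with $a = \parity(A,S,\sigma_a)$, $b = \parity(B,T,\sigma_b)$, and $C = A \triangle B$ carrying one flipped literal. The three input blocks overlap in a fixed pattern dictated by $C = A \triangle B$ and the disjointness of $S,T,U$: the variables of $A \cap B$ occur in $a$ and $b$ only; those of $A \setminus B \subseteq C$ occur in $a$ and in the sum; those of $B \setminus A \subseteq C$ occur in $b$ and in the sum. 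First I would fix an arbitrary order on each of the three blocks $A \cap B$, $A \setminus B$, $B \setminus A$, and apply Lemma~\ref{reordering} three times (each costing $O(n\log n)$ DRAT$^-$ steps, with no new variables since Lemma~\ref{switching} adds none) to rewrite $a$ as the block $A\setminus B$ followed by the block $A\cap B$, to rewrite $b$ as $B\setminus A$ followed by $A\cap B$, and to rewrite the sum $\parity(C,U,\id)$ into $A\setminus B$ followed by $B\setminus A$; this last order is exactly the \emph{favorable} order discussed after Theorem~\ref{theorem:hard_formulas_3layers}. The point of these choices is that every shared block is now traversed in the \emph{same} order by both chains containing it, and each chain exposes a distinguished boundary Tseitin variable: $s^\ast$ in $a$ and $u^\ast$ in the sum, both equal to $\bigoplus_{x\in A\setminus B} x$, and $t^\ast$ in $b$, equal to $\bigoplus_{x\in B\setminus A} x$.

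The linear phase then consists of three short inductions, each identical in spirit to step~2 of Theorem~\ref{thm:DRATrpar}. Over $A\setminus B$ the chains of $a$ and of the sum start from scratch (parity $0$) and process the same variables in the same order, so an $O(|A\setminus B|)$-step resolution induction derives the equivalence $s^\ast \leftrightarrow u^\ast$. Over $A\cap B$ the chains of $a$ and $b$ process the same variables in the same order but start from $s^\ast$ and $t^\ast$ respectively and both terminate at charge $0$ (the constraints $\bigoplus_{x \in A} x = 0$ and $\bigoplus_{x \in B} x = 0$); propagating the invariant that the XOR of the two running partial sums is constant yields $s^\ast \leftrightarrow t^\ast$. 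Over $B\setminus A$ the sum continues from $u^\ast$ while $b$ processes the same variables from its start (reaching $t^\ast$); the same invariant, together with the flipped literal forcing the sum's final value to be $1$, gives the relation $\neg(u^\ast \leftrightarrow t^\ast)$. The three derived two-literal facts $s^\ast \leftrightarrow u^\ast$, $s^\ast \leftrightarrow t^\ast$, and $\neg(u^\ast\leftrightarrow t^\ast)$ are mutually contradictory, so a constant number of resolution steps produces the empty clause.

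Summing up, the reordering phase costs $O(n\log n)$ DRAT$^-$ steps and the linear phase costs $O(|A\cap B| + |A\setminus B| + |B\setminus A|) = O(n)$ resolution steps, for a total of $O(n\log n)$ lines without new variables, as required. I expect the main obstacle to be not the individual inductions -- those reuse the $\rPar$ argument almost verbatim -- but verifying that the three pairwise alignments can be imposed \emph{simultaneously}. This is precisely where the favorable order for the sum is essential: it is the one ordering compatible with both the $A\setminus B$ alignment demanded by $a$ and the $B\setminus A$ alignment demanded by $b$, so one must check that Lemma~\ref{reordering} indeed lets us reach it from the identity order of $\parity(C,U,\id)$. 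A secondary point requiring care is the bookkeeping at the block boundaries (the distinguished variables $s^\ast,t^\ast,u^\ast$ and the degenerate cases where a block is empty or a singleton), but these only affect the base cases of the inductions and not the overall step count.
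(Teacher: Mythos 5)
Your proposal is correct and shares the paper's overall skeleton---an $O(n\log n)$ reordering phase via Lemma~\ref{reordering} followed by an $O(n)$ resolution phase---but it organizes the resolution phase in a genuinely different way. The paper reorders all three chains to a \emph{single} global order (each permutation made increasing on variable indices) and runs \emph{one} induction whose invariant is the ternary constraint $\xor(t^a_{\tau_a(j)},t^b_{\tau_b(j)},t^c_{\tau_c(j)})$ over three simultaneous frontiers: each step introduces the next input variable, which lies in exactly two of the three chains, and resolves it away, advancing two frontiers at once. You instead reorder to a block structure ($A\setminus B$ then $A\cap B$ for $a$; $B\setminus A$ then $A\cap B$ for $b$; $A\setminus B$ then $B\setminus A$ for the sum) and run three separate pairwise inductions, stitched together by the boundary variables $s^*,t^*,u^*$ into a contradictory triangle of (in)equivalences. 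Both variants keep clause width constant (note that your third induction, anchored at the Tseitin variable $u^*$ on one end and at the flipped terminal clause on the other, needs a ternary invariant of the form $u^*\oplus t_i\oplus u_{p+i}=0$ rather than a plain equivalence, and the $A\cap B$ induction is cleanest run backwards from the two charge-$0$ terminal clauses). The trade-off is that the paper's single-order induction handles the interleaving of the three constraints uniformly, with one base case and one terminal case, whereas your block decomposition makes each pairwise interaction transparent but multiplies the boundary bookkeeping: $s^*$, $t^*$ and $u^*$ only exist when each of $A\cap B$, $A\setminus B$ and $B\setminus A$ has size at least two, so several degenerate configurations must be argued separately (the paper's only separate case is $A=B$). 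As you note, these are base-case issues only, so your argument goes through with the same $O(n\log n)$ line count and no new variables.
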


\iftrue
\begin{proof}
	If both $a$ and $b$ have the same variable set either this reduces to a case in Theorem~\ref{thm:DRATrpar}, or it trivially includes the empty clause as the final constraint.
	
	Otherwise, without loss of generality, assume both $a$ and $b$ express even parities, we can flip a literal to achieve this.
	$c$ is the third constraint in $\rAddPar(a,b)$.
	Let $X$ be the set $\SB x_i \SM 0< i \leq n\SE$, without loss of generality assume each $x_i$ appears in at least one of $a$ or $b$. 
	$$a =  \xor(x_{\sigma_a(1)}, x_{\sigma_a(2)}, t^a_1)\wedge {\bigwedge_{j=1}^{n_a-4}  \xor(t^a_{j}, {x_{\sigma_a(j+2)}},t^a_{j+1} )}$$ $$\wedge \xor({t}^a_{n_a-3},x_{\sigma_a(n_a-1)}, x_{\sigma_a(n_a)}) $$
	$$b =  \xor(x_{\sigma_b(1)}, x_{\sigma_b(2)}, t^b_1)\wedge {\bigwedge_{j=1}^{n_b-4}  \xor(t^b_{j}, {x_{\sigma_b(j+2)}},t^b_{j+1} )}$$ $$\wedge \xor({t}^b_{n_b-3},x_{\sigma_b(n_b-1)}, x_{\sigma_b(n_b)})$$
	$$ c =  \xor(x_{\sigma_c(1)}, x_{\sigma_c(2)}, t^c_1)\wedge {\bigwedge_{j=1}^{n_c-4}  \xor(t^c_{j}, {x_{\sigma_c(j+2)}},t^c_{j+1} )}$$ $$\wedge \xor(\neg {t}^c_{n_c-3},x_{\sigma_c(n_c-1)}, x_{\sigma_c(n_c)}) $$
	
	We first consider the ordered special case when $\sigma_a, \sigma_b, \sigma_c $ all preserve order. E.g. $\sigma_l(i)< \sigma_l(j)$ for $i< j$ and $l\in \{a,b,c\}$.
	For $l\in \{a,b,c\}$ and $1 <j\leq n-4$, we find the highest $i\leq j+1$ such that there is some $k$ so that $i=\sigma_l(k)$, since $\sigma_l$ is injective, $\tau_l(j)$ is defined as $k-1$. 
	The purpose of $\tau_l$, is that $t^l_{\tau_l(j)}$ is defined as parity of all the variables in $l$ up to $x_{j+1}$ by index, this way we can talk about comparable $t$-variables from $a,b$ and $c$ together.\medskip

	\textbf{Induction Hypothesis:}
	Suppose 
	\begin{multline*}
	\max (\sigma_a(2),\sigma_b(2),\sigma_c(2 ))-1 < j \\ < \min (\sigma_a(n_a-1),\sigma_b(n_b-1),\sigma_c(n_c-1) )-1
	\end{multline*} then all clauses of $\xor(t^a_{\tau_a(j)},t^b_{\tau_b(j)}, t^c_{\tau_c(j)}) $ can be proven in an $O(j)$ size resolution proof.\medskip
	
	\textbf{Inductive Step:}
	Starting with the four clauses from $\xor(t^a_{\tau_a(j)},t^b_{\tau_b(j)}, t^c_{\tau_c(j)}) $, we introduce $x_{j+2}$ and eliminate it. Because each $x$ appears in two constraints. There are $u, v\in \{a,b,c\}$ such that $u\neq v$ and $\xor(t^{u}_{\tau_u(j)},x_{j+2}, t^{u}_{\tau_u(j)+1}) $ and $\xor(t^{v}_{\tau_v(j)},x_{j+2}, t^{v}_{\tau_v(j)+1}) $ are part of $\rAddPar(a,b)$. Assume without loss of generality $u=a$, $v=b$
	
	First we can eliminate $t^a_{\tau_a(j)}$, we get 8 clauses, 
	then for each of the eight clauses there is a unique clause in $\xor(t^{b}_{\tau_b(j)},x_{j+2}, t^{b}_{\tau_b(j)+1}) $ for which it can resolve without a tautological resolvent.

	\begin{figure}[h]\centering	
		\begin{minipage}{0.35\textwidth}
			\centering

			\smallskip
			
			\framebox{
				\parbox{0.85\textwidth}{

					$$
						\bar t^b_{\tau_b(j)} \vee t^c_{\tau_c(j)} \vee x_{j+2} \vee t^{a}_{\tau_a(j)+1} $$ $$
						\bar t^b_{\tau_b(j)}\vee \bar t^c_{\tau_c(j)} \vee \bar x_{j+2} \vee t^{a}_{\tau_a(j)+1}$$ $$
						\bar t^b_{\tau_b(j)}\vee t^c_{\tau_c(j)} \vee \bar x_{j+2} \vee \bar t^{a}_{\tau_a(j)+1} $$ $$ \bar t^b_{\tau_b(j)}\vee \bar t^c_{\tau_c(j)} \vee x_{j+2} \vee \bar t^{a}_{\tau_a(j)+1}$$ $$
						t^b_{\tau_b(j)}\vee \bar t^c_{\tau_c(j)} \vee x_{j+2} \vee t^{a}_{\tau_a(j)+1} $$ $$ t^b_{\tau_b(j)}\vee t^c_{\tau_c(j)} \vee  \bar x_{j+2} \vee  t^{a}_{\tau_a(j)+1}$$ $$
						t^b_{\tau_b(j)}\vee t^c_{\tau_c(j)} \vee  x_{j+2} \vee \bar t^{a}_{\tau_a(j)+1} $$ $$  t^b_{\tau_b(j)}\vee \bar t^c_{\tau_c(j)} \vee \bar x_{j+2} \vee \bar t^{a}_{\tau_a(j)+1}$$
			}}
		\end{minipage}	
		\begin{minipage}{0.35\textwidth}
			\centering

			\smallskip
			
			\framebox{
				\parbox{0.85\textwidth}{

					$$
						\bar t^b_{\tau_b(j)+1} \vee t^c_{\tau_c(j)} \vee x_{j+2} \vee t^{a}_{\tau_a(j)+1} $$ $$ t^b_{\tau_b(j)+1}\vee \bar t^c_{\tau_c(j)} \vee \bar x_{j+2} \vee t^{a}_{\tau_a(j)+1}$$ $$
						t^b_{\tau_b(j)+1}\vee t^c_{\tau_c(j)} \vee \bar x_{j+2} \vee \bar t^{a}_{\tau_a(j)+1} $$ $$ \bar t^b_{\tau_b(j)+1}\vee \bar t^c_{\tau_c(j)} \vee x_{j+2} \vee \bar t^{a}_{\tau_a(j)+1}$$ $$
						t^b_{\tau_b(j)+1}\vee \bar t^c_{\tau_c(j)} \vee x_{j+2} \vee t^{a}_{\tau_a(j)+1} $$ $$  \bar t^b_{\tau_b(j)+1}\vee t^c_{\tau_c(j)} \vee  \bar x_{j+2} \vee  t^{a}_{\tau_a(j)+1}$$ $$
						t^b_{\tau_b(j)+1}\vee t^c_{\tau_c(j)} \vee  x_{j+2} \vee \bar t^{a}_{\tau_a(j)+1} $$ $$ \bar t^b_{\tau_b(j)+1}\vee \bar t^c_{\tau_c(j)} \vee \bar x_{j+2} \vee \bar t^{a}_{\tau_a(j)+1}$$

			}}
		\end{minipage}
	\end{figure}
	
	The next 8 clauses can resolve in pairs over $x_{j+2}$ to produce the four clauses of $\xor(t^a_{\tau_a(j)+1},t^b_{\tau_b(j)+1}, t^c_{\tau_c(j)}) $:
	$$
		\bar t^b_{\tau_b(j)+1} \vee t^c_{\tau_c(j)} \vee t^{a}_{\tau_a(j)+1} \qquad   t^b_{\tau_b(j)+1}\vee \bar t^c_{\tau_c(j)}  \vee t^{a}_{\tau_a(j)+1}$$ $$
		t^b_{\tau_b(j)+1}\vee t^c_{\tau_c(j)}  \vee \bar t^{a}_{\tau_a(j)+1} \qquad  \bar t^b_{\tau_b(j)+1}\vee \bar t^c_{\tau_c(j)} \vee \bar t^{a}_{\tau_a(j)+1}
	$$
	
	${\tau_a(j)+1}=\tau_a(j+1)$, $\tau_b(j)+1=\tau_b(j+1)$, and $\tau_c(j)= \tau_c(j+1)$. So we are done in $+20$ steps ($+48$ if we want to include deletions) and incremented $j$ by 1.\medskip

\begin{figure*}[ht]
	\centering
	\input{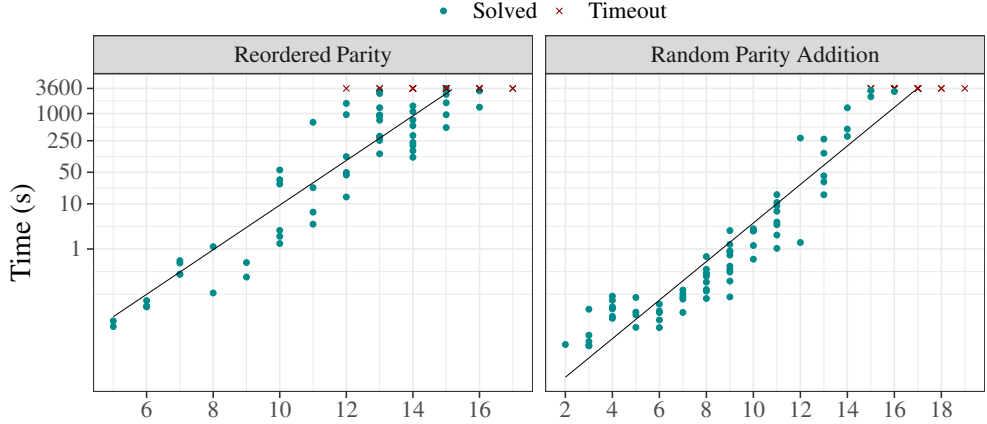}
	\caption{Treewidth (x-axis) vs. solving time (y-axis) for reordered parity (left) and random parity addition (right).}\label{fig:treewidth-sat}
\end{figure*}

	\textbf{Base Case:}
	In the inductive step we proceeded by eliminating two $t$ variables then resolving an $x$ variable. 
	We can do the same with the base case, however we do not start with a clause with three $t$ variables, but the 12 clauses of $\xor(x_{\sigma_a(1)}, x_{\sigma_a(2)}, t^a_1)$, $\xor(x_{\sigma_b(1)}, x_{\sigma_b(2)}, t^b_1)$, $\xor(x_{\sigma_c(1)}, x_{\sigma_c(2)}, t^c_1)$. 
	We would start by resolving any pair of $x$ that appear across these constraints, after that we introduce new $x$ variables by eliminating a pair of $t$ variables and then resolving away the $x$ variables.
	We want to eliminate all $x_i$ for $i \leq j+1$ in increasing order.

	Immediately before eliminating $x_i$ we eliminate all of $t^a_{\tau_a(i-2)},t^b_{\tau_b(i-2)}, t^c_{\tau_c(i-2)}$ if any of them exist have not been eliminated prior.
	In other words we eliminate the smallest index $t$ variables for clauses that include $x_i$. There will be exactly two 
	of these that have not been eliminated prior. After that we can resolve on $x_i$.
	
	Effectively, this sums three parity constraints together, but without extension variables the number of clauses is $2^k$ where $k$ is the length of the parity constraint. Our order matters here because we want to keep $k$ bounded. 
	Note that we have a maximum of $3$ $t$ variables in each working clause, as we only introduce a new one when eliminating an old one. We also introduce a new $x_i$ into the working clauses, when resolving on a $t$ variable. 
	However directly before $t$ is resolved upon, $x_{i-1}$ is eliminated in our procedure. Therefore the number of $x$ variables is also bounded by a constant
	
	We can see this bound as being $6$ as we start with the initial 
	$x$ variables: $x_{\sigma_a(1)}$, $x_{\sigma_a(2)}$, $x_{\sigma_b(1)}$, $x_{\sigma_b(2)}$, $x_{\sigma_c(1)}$, $x_{\sigma_c(2)}$, and we only introduce a new variable after eliminating an older one.
	In fact the bound is at most $4$ because there are at least two pairs (one pair for each of the two lowest values) in this set of identical values. So the width is $7$ or lower.

	We reach the base case once we have eliminated all of $x_{\sigma_a(1)}, x_{\sigma_a(2)}, x_{\sigma_b(1)}, x_{\sigma_b(2)}, x_{\sigma_c(1)}, x_{\sigma_c(2)}$.\medskip

	\textbf{Reaching a contradiction:}
	Reaching the contradiction is similar to the base case of the induction, we simply work in reverse from the other sides of the parity constraints. 
	By the induction proof we reach $\xor(t^a_{\tau_a(j)},t^b_{\tau_b(j)}, t^c_{\tau_c(j)}) $, but end case will generate $\neg \xor(t^a_{\tau_a(j)},t^b_{\tau_b(j)}, t^c_{\tau_c(j)}) $ because of the flipped literal in the $c$ constraint. We can end with a few resolution steps to get the empty clause.\medskip

	\textbf{Reducing to this special case:}
	Using Lemma~\ref{reordering} we can turn any other case to this special case in $O(n\log n)$ many lines. 
\end{proof}

\fi

\iftrue
Once again, these easily convert into Extended Resolution proofs
\begin{theorem}\label{thm:threeER}
	For any parity constraints $a,b $ over $n$ input variables $X$, there are Extended Resolution refutations of $\rAddPar(a,b)$ that have $O(n\log n)$ many lines.
\end{theorem}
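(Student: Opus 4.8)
The plan is to derive this as a corollary of Theorem~\ref{thm:threedrat}, mirroring exactly how the Extended Resolution upper bound for $\rPar$ was obtained from its DRAT$^-$ counterpart. The DRAT$^-$ refutation of $\rAddPar(a,b)$ built in Theorem~\ref{thm:threedrat} splits into two phases: first, a sequence of reordering steps (applications of Lemma~\ref{reordering}) that rewrite the three parity encodings into the ordered special case where $\sigma_a$, $\sigma_b$, and $\sigma_c$ all preserve order; and second, the explicit linear-size derivation for that special case (the base case, the inductive step summing the three constraints, and the final contradiction). I would emulate each phase in Extended Resolution separately.

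For the reordering phase, recall that Lemma~\ref{reordering} is realized by $O(n\log n)$ applications of the switching operation of Lemma~\ref{switching}, which is a DRAT manipulation reusing the eliminated Tseitin variable. I would replace each such application by Lemma~\ref{switchingER}, which performs the same $\xor$-rewrite in a constant number of Extended Resolution steps, at the cost of introducing a fresh extension variable $p'$ in place of the reused variable $p$. Since each switch still costs only $O(1)$ lines, the whole reordering phase is carried out in $O(n\log n)$ Extended Resolution lines. The one bookkeeping difference from DRAT$^-$ is that Extended Resolution has no deletion rule, so the original and intermediate clauses are not removed but merely accumulate; the Tseitin variables of the reordered encoding are therefore the freshly introduced $p'$ variables rather than the original ones.

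For the second phase, the derivation in the ordered special case uses nothing beyond the resolution rule: every line is obtained by resolving existing clauses (eliminating two $t$-variables and one $x$-variable per inductive step). Since resolution is a subsystem of Extended Resolution, these $O(n)$ steps transfer verbatim, applied to the (renamed) Tseitin variables produced by the reordering phase. Concatenating the two phases yields an Extended Resolution refutation of $\rAddPar(a,b)$ with $O(n\log n)$ lines.

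The only point requiring care, and the closest thing to an obstacle, is that because Extended Resolution does not delete clauses, the final resolution refutation runs in the presence of all the leftover clauses from the reordering phase. This is harmless: enlarging an unsatisfiable CNF with additional clauses can never invalidate a resolution refutation of it, so the second-phase derivation still reaches the empty clause. Beyond consistently tracking the renaming $p \mapsto p'$ introduced by Lemma~\ref{switchingER}, the argument is purely a reorganization of the proof of Theorem~\ref{thm:threedrat}, and no new combinatorial work is needed.
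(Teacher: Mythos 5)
Your proposal is correct and follows essentially the same route as the paper: the paper's proof likewise observes that the only non-resolution (RATA) steps in the refutation of Theorem~\ref{thm:threedrat} arise from Lemma~\ref{switching} in the reordering phase and replaces each with Lemma~\ref{switchingER}, leaving the remaining pure-resolution derivation untouched. Your additional remarks about the absence of deletion in Extended Resolution and the renaming $p \mapsto p'$ are sound bookkeeping that the paper leaves implicit.
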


\begin{proof}
	We take the proof of Theorem~\ref{thm:threedrat}, every use of RATA comes about from a use of Lemma~\ref{switching} in the reordering process, which we can replace with steps from Lemma~\ref{switchingER}.
\end{proof}
\fi

\section{Experiments}\label{sec:experiments}

We ran experiments to confirm that the reordered parity and random parity addition formulas are hard to refute for CDCL solvers, and that their hardness is largely explained by the treewidth of their Tseitin graphs.
This is expected given the lower bounds of Theorems~\ref{theorem:hard_formulas} and~\ref{theorem:hard_formulas_3layers}, but these results are asymptotic and probabilistic, and it is not certain that they apply to relatively small formulas encountered in practice.

The experiments described here were performed on a cluster with Intel Xeon E5649 processors at 2.53~GHz running 64-bit Linux. An 8~GB memory limit and varying time limits were enforced with {\sc RunSolver}~\cite{Roussel11}.

\subsection{Problem 1: Reordered Parity}
We generated a benchmark set of  $\rPar(n, \sigma)$ for $n = 50$. Experiments for increasing values of $n$ were done by ~\citet{ChewH20}. To get formulas of varying treewidth, the permutations $\sigma$ were constructed in several ways:
\begin{enumerate}[left=0pt,label=(\alph*)]
	\item $5$ permutations were drawn from a uniform distribution.
	\item $30$ permutations were obtained from a stochastic process following a Mallows distribution 
 	whose parameter controls the likelihood of inversions~\cite{mallows1957}. 
	\iftrue Prior work on the distribution tends to show that the treewidth of ``tangled path'' graphs similar to our contracted Tseitin graphs increases with this parameter~\cite{EnrightMPS21}. The stochastic process is known as $q$-Mallows process and is described in~\cite[Section 2]{BhatnagarP15}.\fi
	\item $30$ permutations come from a sequence of random adjacent swaps, with a varying number of swaps.
	\item $15$ permutations were constructed by a sequence of random adjacent swaps until an element is a set distance away from its original position in the original order.
\end{enumerate}
An ideal construction would have allowed us to uniformly sample graphs $G_\sigma$ for a fixed $n$, with treewidth lying in a fixed range. But we know of no such construction that is also efficient. Hence the three last constructions listed above, that have parameters that intuitively give us some control on the treewidth. The drawback is that the graphs are not sampled uniformly at random, as in the theoretical results.

The resulting Tseitin graphs have $100$ vertices each, so determining their treewidth is challenging.
We obtained upper and lower bounds using tools by \citet{Tamaki22}.\footnote{\url{https://github.com/twalgor/tw}}
Within a time limit of $3600$ seconds, the treewidth of only~$30$ graphs could be computed exactly.
For another~$30$ graphs, no non-trivial lower bound was returned.
To get lower bounds for such graphs~$G_\sigma$, we determined the treewidth of the graph~$G^*_\sigma$ obtained by contracting the edges $(i,\sigma(i))$ for \hbox{$i \in n$}.
Lemma~\ref{lemma:tw_after_contraction} shows that $\tw(G^*_\sigma)
\leq \tw(G_\sigma) \leq 2 \cdot \tw(G^*_\sigma)$, and since the graphs $G^*_\sigma$ only contain half as many vertices, matching upper and lower bounds could be computed for all except two instances (curiously, these were graphs for which the treewidths of the original graphs $G$ could be determined quickly). 
\iftrue
Figure~\ref{fig:treewidthUL} shows the best lower and upper bounds on the treewidth obtained for each graph.
\fi

We ran the CDCL solver CaDiCaL~\cite{BiereFazekasFleuryHeisinger2020} on each reordered parity formula with default settings and a timeout of $3600$ seconds. CaDiCaL generates Reverse Unit Propagation (RUP) proofs~\cite{Gelder08,HeuleHW13}, which can be converted to resolution proofs with a quadratic overhead~\cite{GoldbergN03}.
Figure~\ref{fig:treewidth-sat} (left) plots the solver's running time against the best upper bounds on the treewidth for each instance.
The regression line clearly shows that the running time grows exponentially with the treewidth.
By contrast, CryptoMiniSat~\cite{SNC09}, a solver that is capable of reasoning with XORs using Gaussian elimination, was able to solve all instances within a few seconds.

\subsection{Problem 2: Random Parity Additions}
We generated a benchmark set consisting of $95$ reordered parity formulas $\rAddPar(a, b, \sigma_c)$. To get formulas of varying treewidth, we altered the value $p$, which dictates the probability of each input variable being selected to be including in $a$, and the same probability also for $b$. 
We chose~$p$ in increments of $0.05$ from $0.05$ up to $0.95$. We scaled the number of variables we drew from based on $p$ to keep the expectation of the number of clauses the same, in our sample the number of clauses ended up being between 360 and~536. $\sigma_c$ was chosen to be favorable to solving. 
Figure~\ref{fig:treewidth-sat} plots the solver's running time against the best upper bounds on the treewidth for each instance. Once again we see running time grows exponentially with the treewidth.
\iftrue
\begin{figure}
	\centering
	\input{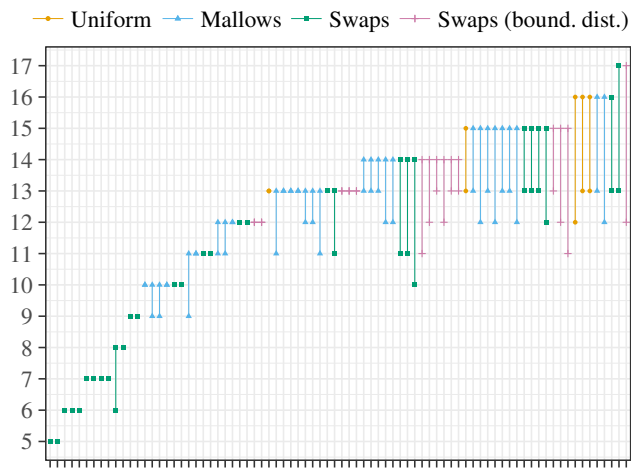}\caption{Lower and upper bounds for the treewidth (y-axis) of reordered parity formulas (x-axis).}\label{fig:treewidthUL}
\end{figure}
\fi
\section{Conclusion}

We present both theoretical and experimental evidence that treewidth explains the hardness of reordered parity, and random parity additions for CDCL/resolution. 

\citet{ChewH20} have left the DRAT$^-$ upper bound for $\rPar$ without
a proof lower bound for resolution.  We have now provided that. In
particular, noticing that the instances were Tseitin formulas, we were
driven to study the treewidth of their underlying graphs and we have
shown that it is, with high probability, linear in the number of
variables. And although the relationship between resolution
refutations of Tseitin formulas and the graph's treewidth is not fully
understood yet, results do exist for linear treewidth that are enough
for us to prove exponential lower bounds on the length of the
resolution proofs. Previous experiments showed the exponential
increase in CaDiCaL proof size as the number of variables increases~\cite{ChewH20},
in this paper we show that same exponential increase (but in solving
time), but with the number of variables and clauses controlled and now
the treewidth being varied.

We generalize this further to $\rAddPar$ which draws its motivation from Gaussian elimination. $\rAddPar$ provides yet another example of a hard Tseitin formula, and its hardness is confirmed both theoretically and experimentally. Again,  treewidth is the important factor in determining its hardness. In both the $\rPar$ and $\rAddPar$ case we can draw the conclusion that the variable order matters.

Just as in $\rPar$, we can show that we have short DRAT$^-$ proofs for $\rAddPar$. This will be useful for verification, with a hope that it may be generalized to Gaussian elimination. 

In future work, it would be interesting to explore the BDD-techniques for dealing with XOR-constraints, in a similar manner to our exploration on CDCL here.
On reordered parity EBDDRES \cite{BDDER} can perform even more poorly than CaDiCaL, but recent work \cite{BBH22} show BDD-solvers can perform even better than Chew and Heule's sorting tool, so the overall picture may be more complicated.

\section{Acknowledgments}

The authors acknowledge the support from the FWF (P36420, ESP 197, ESP 235) and the WWTF (ICT19-060, ICT19-065).

\bibliography{main}

\end{document}